\crefname{problem}{problem}{problems}
\Crefname{problem}{Problem}{Problems}
\pgfplotsset{compat=1.15}
\newtheorem{lemma}{Lemma}
\newtheorem{proof}{Proof}
\newtheorem{remark}{Remark}
\newtheorem{definition}{Definition}
\newtheorem{modl}{Model}
\def\ps@pprintTitle{%
  \let\@oddhead\@empty
  \let\@evenhead\@empty
  \def\@oddfoot{\reset@font\hfil\thepage\hfil}
  \let\@evenfoot\@oddfoot
}
\newcommand{\walks}{\mathcal{W}}
\newcommand{\PC}{$\emph{min}$ $k$-FDPC\xspace}
\newcommand{\PT}{$\emph{min}$ $k$-FDT\xspace}
\newcommand{\PW}{$\emph{min}$ $k$-FDW\xspace}
\newcommand{\newcom}[1]{\color{pink}}
\journal{European Journal of Operational Research}
\begin{document}

\begin{frontmatter}
\title{Minimum Flow Decomposition in Graphs with Cycles\\ using Integer Linear Programming}

\author[UH]{Fernando H. C. Dias}\corref{mycorrespondingauthor}
\ead{fernando.cunhadias@helsinki.fi}
\cortext[mycorrespondingauthor]{Corresponding author}
\author[UM]{Lucia Williams}
\author[UM]{Brendan Mumey}
\author[UH]{Alexandru I. Tomescu}
\address[UH]{Department of Computer Science, University of Helsinki, Finland}
\address[UM]{School of Computing, Montana State University, Bozeman, MT, USA}

\begin{abstract}
Minimum flow decomposition (MFD) --- the problem of finding a minimum set of weighted source-to-sink paths that perfectly decomposes a flow --- is a classical problem in Computer Science, and variants of it are powerful models in a different fields such as Bioinformatics and Transportation. Even on acyclic graphs, the problem is NP-hard, and most practical solutions have been via heuristics or approximations. While there is an extensive body of research on acyclic graphs, currently there is no \emph{exact} solution on graphs with cycles. In this paper we present the first ILP formulation for three natural variants of the MFD problem in graphs with cycles, asking for a decomposition consisting only of weighted source-to-sink paths or cycles, trails, and walks, respectively. On three datasets of increasing levels of complexity from both Bioinformatics and Transportation, our approaches solve any instance in under 10 minutes. 
Our implementations are freely available at \url{github.com/algbio/MFD-ILP}.
\end{abstract}

\begin{keyword}
Integer programming \sep Bioinformatics \sep Network flow \sep Flow decomposition \sep Transportation science
\end{keyword}

\end{frontmatter}

\newpage

\section{Introduction}
\label{into}

\subsection{Background}
\label{sec:background}

Flow decomposition (FD) is a classical and well-researched network problem in which a source-to-sink flow needs to be decomposed into a set of weighted source-to-sink paths (and possibly cycles) such that their respective weights perfectly fit each edge's flow. An essential textbook property (see, e.g.~ \citep{ahuja1988network}) is that any flow in a graph with $m$ edges can be decomposed into at most $m$ paths or cycles (each with some associated weight). However, some flow paths or cycles could be suitably combined
to form a smaller decomposition. Therefore, a popular and practically-motivated variant of the flow decomposition problem is to find the decomposition into the \emph{minimum} amount of paths or cycles (Minimum Flow Decomposition problem, MFD). This variant, however, is NP-hard, even if the input network
is restricted to a directed acyclic graph (DAG), as first observed by~\cite{VATINLEN20081390}.

Despite their NP-hardness, MFD and its variants are used in many applications, such as computer networking
\citep{mumey2015parity,hartman2012split,cohen2014effect,hong2013achieving},
Transportation science \citep{olsen2020study, Ohst:2015aa}, and bioinformatics \citep{dias2022fast,kovaka2019transcriptome,kloster2018practical,scallop2, jumper}. The input graph may naturally be a DAG in some applications, such as reference-based RNA transcript assembly in bioinformatics. However, in many applications, the input network graph naturally contains cycles, as in computer and transportation networks, or sequence assembly problems from bioinformatics \citep{grabherr2011trinity,schulz2012oases}. 

Despite the need to decompose flows with cycles in applications, much of the recent progress on MFD has focused on DAGs, due to properties associated with acyclicity that can be exploited into algorithms, as in the fixed-parameter tractable (FPT) algorithms of \cite{kloster2018practical} and \cite{williams2021flow}, the integer linear programming (ILP) formulation of \cite{dias2022fast}, and the heuristic algorithm called Catfish~\citep{shao2017theory}.

In fact, no exact (that is, returning a truly \emph{minimum} solution) solver exists for MFD on graphs with cycles. Because different applications may require different types of decompositions, in this paper, we study three natural versions of the problem on cyclic inputs: decomposing into source-to-sink paths or cycles (in this paper, paths and cycles do not repeat nodes, except for cycles where the first and last nodes are the same), into source-to-sink trails (which may repeat nodes but not edges), and into source-to-sink walks (which may repeat both nodes and edges).

\subsection{Related Work}

The MFD problem was first formally studied in~\citep{VATINLEN20081390}, where it was shown to be NP-hard,
even on DAGs. This result was later strengthened by~\cite{hartman2012split}, who proved several additional results for the problem. They showed that the problem is also NP-hard when the flow values come from only the set $\{1, 2, 4\}$, that MFD is hard to approximate (i.e., there is some $\epsilon > 0$ such that MFD cannot be approximated within a $(1+\epsilon)$ factor, unless P=NP), even on DAGs. On the positive side, on DAGs,
it is possible to decompose all but a $\varepsilon$-fraction of the flow within a $O(1/\varepsilon)$ factor of the optimal number of paths. \cite{mumey2015parity} gave an approximation algorithm for DAGs with an exponential approximation factor based on decomposing the flow into paths with weights that are powers of two. Recently, \cite{Caceres:2022vi} showed that if the weights of the flow decomposition paths can also take negative integer values, then a polynomial-time approximation algorithm exists for DAGs, with an approximation factor of $\lceil\log||f||\rceil + 1$, where $||f||$ is the largest flow value of any edge. Additionally, \cite{kloster2018practical} showed that the problem on DAGs is FPT in the size of the
minimum decomposition.

In practice, though, applications tend to use heuristics, such as the greedy methods based on choosing the widest or longest paths~\citep{VATINLEN20081390}, and these heuristics can be applied to both DAGs and general graphs, possibly with cycles. Some of these greedy methods could be improved on DAGs by making iterative modifications to the flow graph before finding a greedy decomposition, as shown by \cite{shao2017theory}. However, \cite{Caceres:2022vi} also showed that there exist some instances (even DAGs) where the MFD has size $O(\log m)$ ($m$ is the number of edges of the graph), where such greedy methods return flow decompositions as large as $\Omega(m/\log m)$, meaning that they can be exponentially worse than the optimum.

Many application-oriented algorithms on DAGs for MFD and related problems use ILP, taking advantage of existing software such as Gurobi \citep{gurobi} and CPLEX~\citep{studio2017cplex}. However, most of these solutions encode every possible source-to-sink path as a variable, yielding exponential-size ILP formulations that are impractical to solve for larger instances, which occur in applications such as computer networking~\citep{cohen2014effect}.
A common strategy amongst bioinformatics applications to deal with this issue is to pre-select some subset of the possible paths in the graph, either for all instances (as in vg-flow~\citep{baaijens2020strain} and CLIIQ~\citep{lin2012cliiq}), or only when the input is extensive
(as in MultiTrans~\citep{zhao2021multitrans} and SSP~\citep{safikhani2013ssp}). However, by pre-selecting only some paths that can be part of a solution,
these algorithms may return a non-optimal decomposition.

Recently, two polynomial-size ILP formulations have been proposed that can be used to solve MFD on DAGs without pre-selecting paths. Namely, the program JUMPER \citep{jumper} requires an additional condition on the input DAG, namely that it has a unique Hamiltonian path. This allows all paths to be uniquely determined by subsets of edges that do not pairwise overlap along the Hamiltonian path, yielding a formulation for decomposing into $k$ paths using only a quadratic number of variables and constraints. In~\cite{dias2022fast}, the authors give a general solution working on any DAG to find a size $k$ decomposition, even without a Hamiltonian path, again using only a quadratic number of variables and a linear number of constraints. There, the insight is that source-to-sink paths can be encoded using conservation of flow constraints, as in~\cite{taccari2016integer}. However, neither of these approaches can be straightforwardly extended to handle non-DAG inputs.

As outlined above, most of the contributions to the flow decomposition problem have been restricted to DAGs, the only attempts for the general cyclic case that we know of being limited to heuristics. Since these heuristics can be exponentially worse than the optimum (as discussed above), the quest for efficient and exact MFD approaches to the cyclic case is wide open.

\subsection{Our contributions}

In this paper, we give the first \emph{exact} solutions for MFD on graphs with cycles, based on ILP, for all three natural variants mentioned in \Cref{sec:background}. In addition, we show that our solutions are also efficient through experiments on three datasets from two application domains, Bioinformatics and Transportation Science.

While our solutions follow the same high-level approach from \citep{dias2022fast} when solving MFD on DAGs (i.e., first model each of the $k$ paths by an auxiliary unit flow and then requires that these paths admit weights so that they form a flow decomposition), the cyclic case poses several difficulties, which we overcome as follows. 

As a major novelty, in this paper, we show how to formulate different types of walks in graphs with cycles (i.e., paths \emph{or} cycles, trails, walks) using ILP. The simple path-modelling techniques from~\citep{dias2022fast} heavily rely on the acyclicity of the graph; thus, here, we need to develop new techniques. Another difficulty we overcome is that trails can visit nodes multiple times, and walk can visit both nodes and edges multiple times. Apart from complicating the modelling, these facts also render some constraints more challenging to linearize than in~\citep{dias2022fast}.
Our solutions can be summarized as follows:
\begin{itemize}
\item For the paths \emph{or} cycles variant, we extend the so-called ``sequential formulation'' of~\cite{miller1960integer} (see also \citep{taccari2016integer}) to uniformly model either a path \emph{or} a cycle (i.e., by a \emph{single} set of variables with associated constraints). We still model them by an auxiliary unit flow but add new constraints to require that this auxiliary flow induces either a path \emph{or} a cycle. 
\item For the trails and walks variants, we give a novel reachability-based formulation characterizing when the auxiliary unit flow corresponds to a trail or to a walk, as follows. In order to model a source-to-sink walk, more than simply modelling a unit flow is required because in graphs with cycles, and it may induce \emph{isolated} strongly connected components. Thus, we add constraints to require that any node incident to the unit flow modelling the walk is reachable from the source via a spanning tree rooted at the source and using edges of the unit flow. 
\item For the trails variant, we also propose an alternative approach that forbids such isolated strongly connected components via an iterative constraint generation procedure. This approach is inspired by previous constraint generation approaches from \citep{dantzig1954solution,taccari2016integer}. However, these previous approaches model paths by forbidding only cycles (since a unit flow induces a path if and only if it contains no cycles). Trails and walks can contain cycles, so we need to forbid entire strongly connected components.
\end{itemize}

All the formulations are of quadratic size in the number of constraints and variables, except for the constraint generation approach, which introduces several linear constraints in the graph's number of edges per iteration.

Our ILPs constitute the \emph{first} exact solutions for the NP-hard MFD problem on graphs with cycles for its three natural variants on such inputs. At the same time, we also show that they are efficient with biological and transportation data from different sources, solving simple instances in under 60 seconds and more complex ones in under {12 minutes}. Lastly, mathematical programming can provide significant flexibility that is not easy to incorporate in purely algorithmic formulations. In fact, our ILP formulations can be easily extended to practical aspects of applications. This can be done either by suitably adapting the formulation of what constitutes an element of a decomposition (here, we extended paths to paths or cycles, trails, and walks) or by adapting the condition of when the weighted paths fit the flow edges (for example, by considering inexact, or imperfect flow decomposition as in \citep{williams2019rna} and \citep{dias2022fast}).

The paper is organized as follows. We review the basic concepts of flow networks and flow decomposition and define the problems addressed in \Cref{prelim}. We next present the ILP models for each problem variant, starting with paths or cycles in \Cref{pc}, trails in \Cref{pt}, and walks in \Cref{pw}. Numerical experiments are provided in \Cref{exp}, and concluding remarks are discussed in \Cref{con}.

\section{Preliminaries}
\label{prelim}

\subsection{Basic notions}

In this paper, by a graph $G = (V,E)$, we mean a directed graph with $E \subseteq V \times V$. We also assume that all graphs are weakly connected in the sense that there is an undirected path between any pair of nodes. A \emph{source} of $G$ is a node without incoming edges, and a \emph{sink} is a node without outgoing edges. We will use $n$ and $m$ to denote the cardinality of the sets of nodes V and edges E, respectively, of the graph.

A \emph{walk} in $G$ in a sequence of edges in $E$ where consecutive edges share an end and a start. Given a walk $W$ in $G$, and an edge $(u,v)$ of $G$, we denote by $W(u,v)$ the number of times $W$ passes through $(u,v)$ (and thus 0 if $W$ does not pass through $(u,v)$). We say that $W$ is a \emph{trail} if $W(u,v) = 1$ for all edges $(u,v)$ in $W$. We say that $W$ is a \emph{path} if $W$ repeats no node, and that $W$ is a \emph{cycle}  if $W$ repeats no node, with the exception that the first and the last node are the same. If $W$ has first node $s$ and last node $t$, we say that $W$ is an \emph{$s$-$t$ walk} (or \emph{$s$-$t$ trail}, or \emph{$s$-$t$ path}, accordingly). A \emph{strongly connected component of $G$} is an inclusion-maximal set $C$ of nodes of $G$ such that for any $x,y \in C$, there is an $x$-$y$ path, and an $y$-$x$ path.

\subsection{Network flows and flow decompositions}
\label{sec:network-flows-and-decompositions}

\begin{definition}[Flow network]
\label{def:flownetwork}
A tuple $G=(V,E,f)$ is said to be a \emph{flow network} if $(V,E)$ is a graph with unique source $s$ and unique sink $t$, where for every edge $(u,v) \in E$ we have an associated positive integer \emph{flow value} $f_{uv}$, satisfying \emph{conservation of flow} for every $v \in V \setminus \{s,t\}$, namely:
\begin{equation} \label{eqn:conservation_of_flow}
\sum_{(u,v) \in E} f_{uv} = \sum_{(v,w) \in E} f_{vw}.
\end{equation}
\end{definition}

Next, we define when a set of generic walks in a flow network form a flow decomposition. By considering different types of walks, we will obtain the specific problem variants of this paper.

\begin{definition}[$k$-Flow Decomposition]

A \emph{$k$-flow decomposition} $(\walks,w)$ for a flow network $G=(V,E,f)$ is a set of $k$ walks $\walks = (W_1,\ldots,W_k)$ in $G$
and associated weights $w = (w_1,\ldots,w_k)$,
with each $w_i \in \mathbb{Z}^{+}$, such that for each edge $(u,v) \in E$ it holds that:
\begin{equation}
\label{eqn:flow_eq}
\sum_{i \in \{1,\dots,k\}} w_iW_i(u,v) = f_{uv}.
\end{equation}
The number $k$ of walks is also called the \emph{size} of the flow decomposition.

\label{def:flow-decomposition-walks}
\end{definition}

We will consider several types of walks, and obtain corresponding variants of the flow decomposition problem (see also \Cref{fig:FD}). We will also refer to the walks of a decomposition as its \emph{elements}. Next, we define the flow decomposition problems by asking for decompositions into \emph{at most} $k$ elements. 

\begin{figure}
	\centering\vspace{-2pt}
	\subfloat[A flow network]{\includegraphics[width=0.4\textwidth]{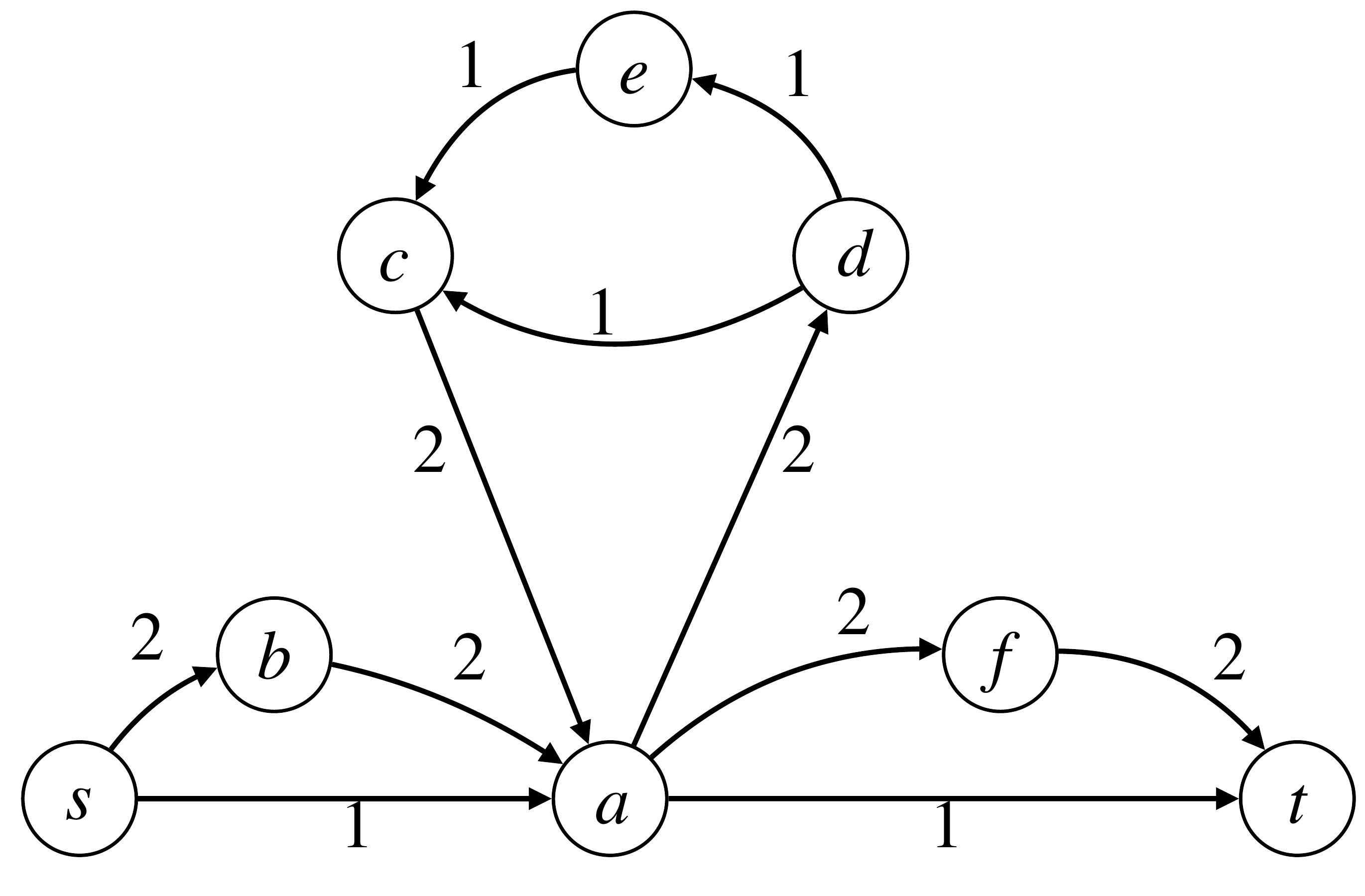}\label{fig:net}}
	\hspace{1cm}
	\subfloat[A 4-flow decomposition into $s$-$t$ paths and cycles of weights $(2,1,1,1) $]{\includegraphics[width=0.4\textwidth]{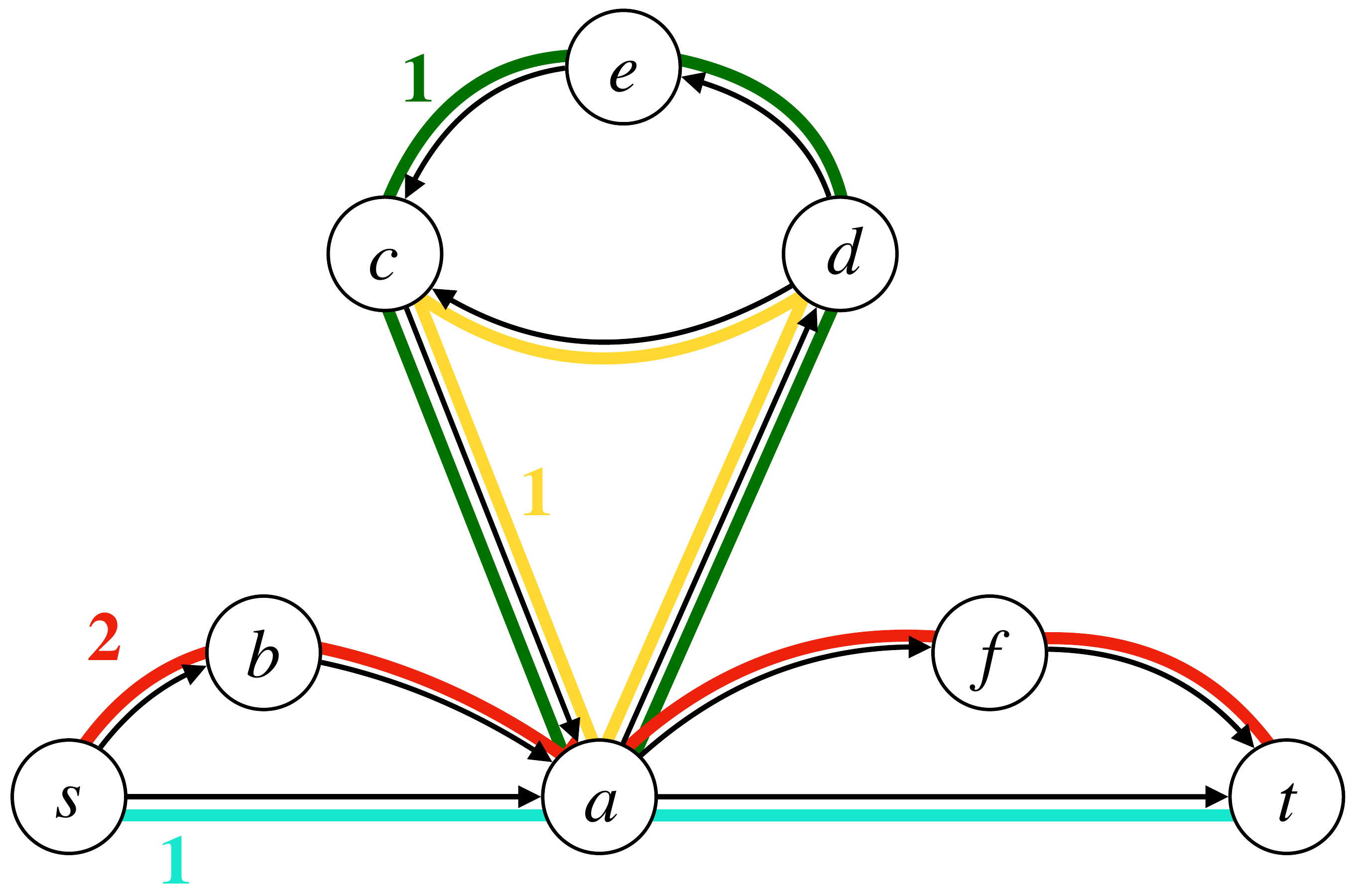}\label{fig:4-FD}}
	\\
	\subfloat[A 3-flow decomposition into $s$-$t$ trails of weights
	$(1,1,1)$]{\includegraphics[width=0.4\textwidth]{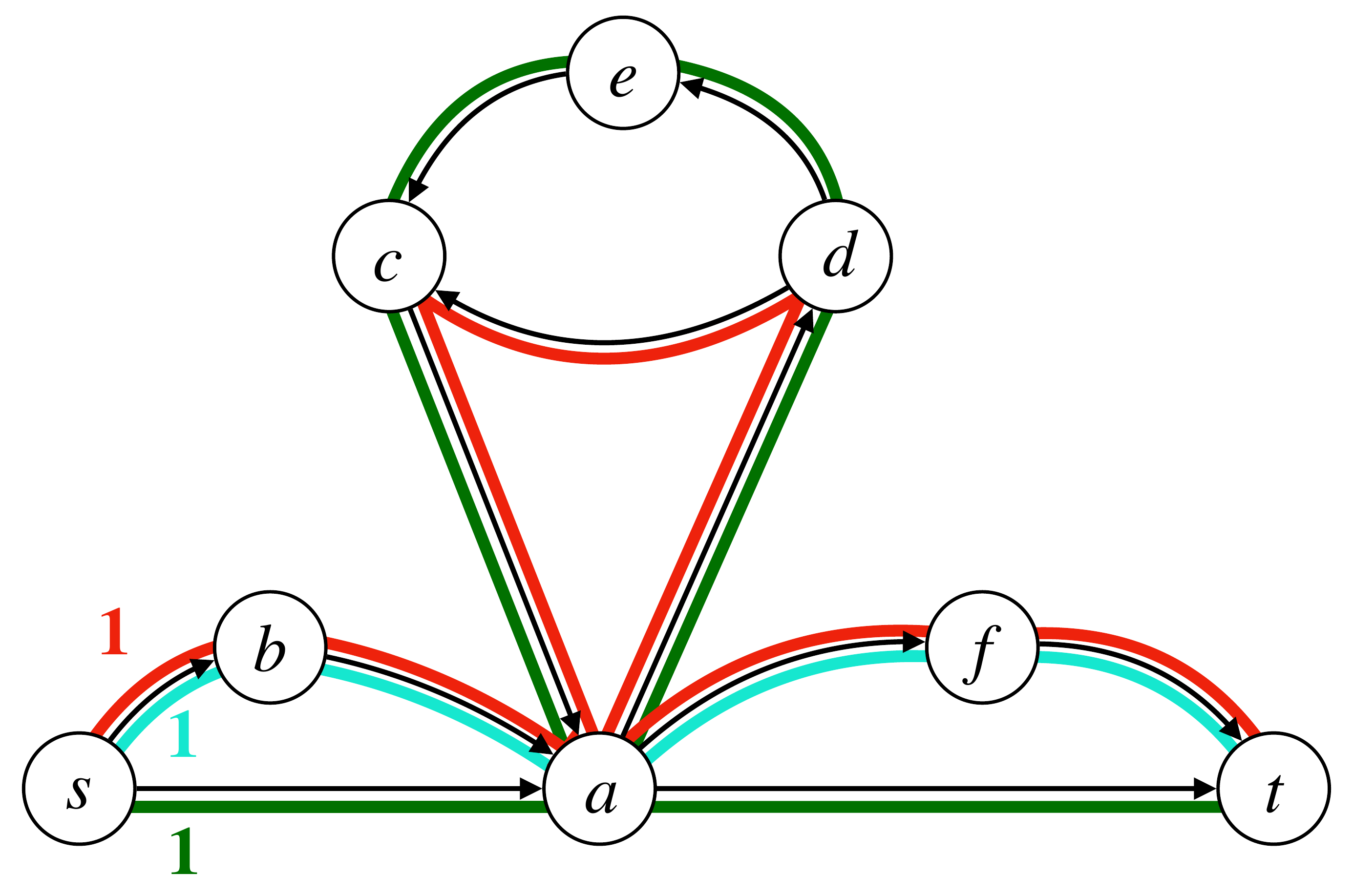}\label{fig:3-FD}} 
	\hspace{1cm}
	\subfloat[A 2-flow decomposition into $s$-$t$ walks of weights $(2,1)$.]{\includegraphics[width=0.4\textwidth]{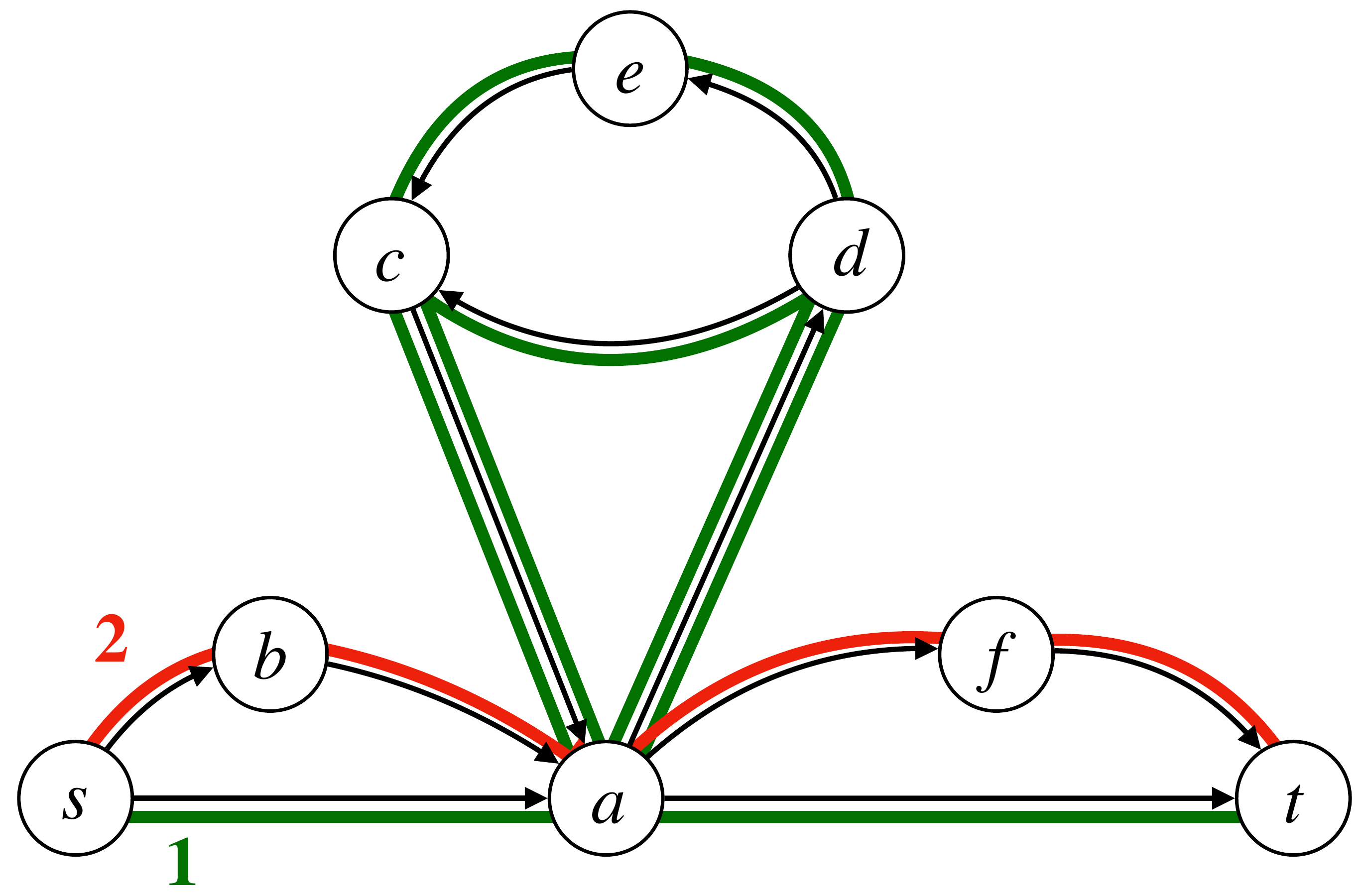}\label{fig:2-FD}}\\
	\caption{Example of a flow network (\ref{fig:net}) and of three different types of minimum flow decomposition into paths or cycles (\ref{fig:4-FD}), trails (\ref{fig:3-FD}) and walks (\ref{fig:2-FD}).\label{fig:FD}
	}
t\end{figure}

\begin{definition}[$k$-Flow Decomposition Problems]
\label{def:problem-variants}
Given a flow network $G=(V,E,f)$, consider the problem of finding a flow decomposition into at most $k$ walks and associated weights. If these walks are required to be:
\begin{itemize}
    \item $s$-$t$ paths or cycles, then we call the resulting problem \emph{$k$-Flow Decomposition into Paths or Cycles ($k$-FDPC)};
    \item $s$-$t$ trails, then we call the resulting problem \emph{$k$-Flow Decomposition into Trails ($k$-FDT)};
    \item $s$-$t$ walks, then we call the resulting problem \emph{$k$-Flow Decomposition into Walks ($k$-FDW)}.
\end{itemize}
\end{definition}

It is a standard result (see e.g.~\cite{ahuja1988network}) that any flow network can be decomposed into paths or cycles because one can consider any edge $(u,v)$, and by conservation of flow, extend this edge in both directions either into an $s$-$t$ path, or into a cycle, associate weight $1$ to it, and remove it from the flow network, obtaining another flow network. We can analogously deduce that any flow network can also be decomposed into a collection of $s$-$t$ walks. The simplest proof of this fact can be obtained by creating another graph by replacing every edge $(u,v)$ with $f(u,v)$ parallel edges and adding as many parallel edges from $t$ to $s$ as there is flow out-going from $s$. Since in this graph, every node has as many in-neighbours as out-neighbours (by construction and by conservation of flow), the resulting graph has an Eulerian walk with the first and last node equal to $s$. Every time this walk passes through one of the parallel edges from $t$ to $s$ we cut it by removing such edges. The resulting pieces are $s$-$t$ walks covering all edges, thus forming a flow decomposition of the original graph, where each walk is associated with weight 1.

However, not all flow networks admit flow decomposition into $s$-$t$ trails. For example, this is the case for the flow network made up of edges $(s,a),(a,d),(d,c),(c,a),(a,t)$, with flow values $f(s,a) = 1$, $f(a,d) = 2$, $f(d,c) = 2$, $f(c,a) = 2$ and $f(u,t) = 1$ (see also \Cref{fig:no_trails}). 

\begin{figure}
    \centering
    \includegraphics[scale=0.25]{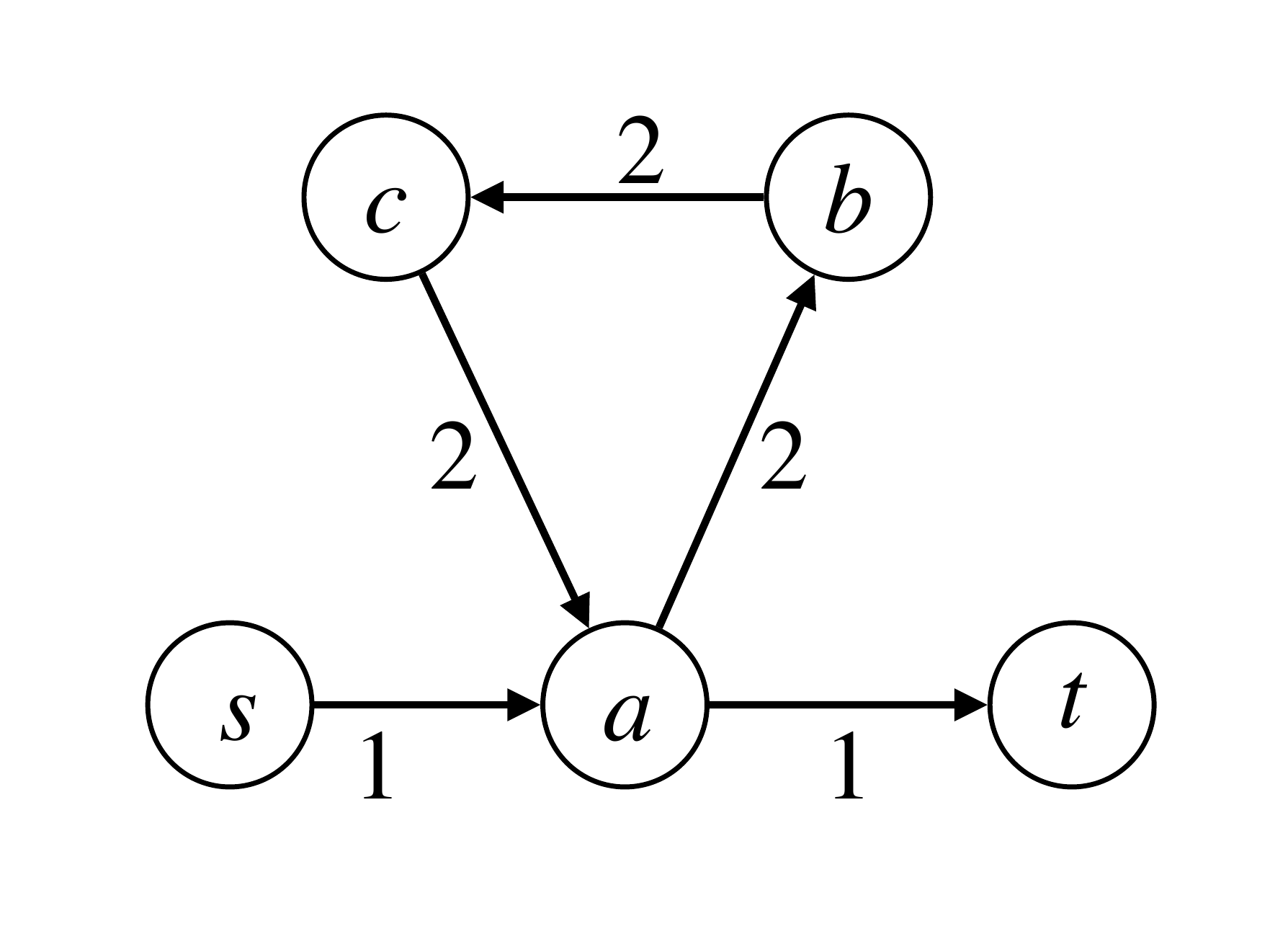}
    \caption{Example of a network that does not admit a decomposition into $s$-$t$ trails.
\label{fig:no_trails}
}
\end{figure}

In practice, we are interested in decompositions of minimum size (i.e., with the minimum number of elements, that is, walks, recall \Cref{def:flow-decomposition-walks}). As such, we can introduce the \emph{minimization} version of each problem, as follows.

\begin{definition}[Minimum Flow Decomposition (MFD)]
\label{def:min-flow-decomposition}
    For each of the problems $k$-FDPC, $k$-FDT, $k$-FDW, its \emph{minimization} version asks finding a flow decomposition of minimum size (i.e., minimizing $k$).
\end{definition}

All our ILP formulations will assume at most $k$ paths/cycles, trails, or walks. As such, the minimization versions of the problems can be solved via an iterative process, trying all possible $k$ in increasing order until reaching the minimum size $k^\ast$ of a decomposition (or $m$, which is an upper bound on the size of an MFD in all problem variants, as discussed above). When $k > m$, it means that there is no feasible solution. This iterative approach leads to $k^\ast$ runs of our ILPs.

 However, a more efficient approach, leading to only $O(\log k^\ast)$ runs of our ILPs, is based on the fact that if a decomposition of size at most $k$ exists, then trivially also one of size at most $k' > k$ exists as well.
 As such, we can first do a \emph{doubling} (or \emph{exponential}) approach~\citep{bentley1976almost} by running our ILPs with $k = 1, 2, 4, \dots$, until finding the smallest $k$ such that a decomposition of size at most $2k$ exists, but one of size $k$ does not exists. Then, it holds that $k^\ast \in (k,2k]$, and we can find $k^\ast$ with a binary search in this interval.

\subsection{ILP formulation of $s$-$t$ paths in DAGs and of flow decompositions}
\label{sec:ilp-fd-dags}

In this section we review the formulation from~\cite{taccari2016integer} of an $s$-$t$ path in a graph $G = (V,E)$, with $s,t \in V$, and then recall the use of this formulation in the ILP solution for MFD in DAGs~\citep{dias2022fast}.

For every edge $(u,v) \in E$, we can introduce a binary variable $x_{uv}$. The idea is to represent the $s$-$t$ path by the edges $(u,v)$ having $x_{uv} = 1$. If $G$ is a DAG, it suffices to impose the following constraint, stating that the $s$-$t$ path starts with a single edge out-going from $s$, ends with a single edge in-coming to $t$, and at every other node $v$, either it does not pass through the node, or if it reaches $v$, it also exits $v$. Note that since $G$ is a DAG, no walk can use a node $v$ multiple times.

\begin{equation}
\label{eqn:taccari-fc}
    \sum_{(u,v) \in E} x_{uv} - \sum_{(v,u) \in E} x_{vu} = 
    \begin{cases}
    0, & \text{if $v \in V \setminus \{s,t\}$}, \\
    1, & \text{if $v = t$}, \\
    -1, & \text{if $v = s$}.
    \end{cases}
\end{equation}
We will refer to the above constraint as the ``flow conservation'' constraint (not to be mistaken with the conservation of the input flow). In~\cite{dias2022fast}, a flow decomposition in DAGs into exactly $k$ $s$-$t$ paths is expressed by adding $k$ copies of the above variables, namely $x_{uvi}$, for $i \in \{1,\dots,k\}$, and imposing the flow conservation constraint \eqref{eqn:taccari-fc} for all of them. The flow decomposition constraint \eqref{eqn:flow_eq} can be stated as
\begin{align}
\label{eqn:ilp_flow_eq}
& \sum_{i \in \{1,\dots,k\}} x_{uvi}w_i = f_{uv}, && \forall (u,v) \in E.
\end{align}

We refer to this ILP formulation for MFD in DAGs as the \emph{standard formulation}; our formulations will build on this, mainly by possibly changing \cref{eqn:taccari-fc} and adding further constraints to the $x_{uvi}$ variables to model all types of walks considered in this paper.

\begin{remark}
\label{rem:lin}
The constraint in Eq.~\eqref{eqn:ilp_flow_eq} can be linearized by introducing an integer variable $\pi_{uvi}$ representing the product $x_{uvi}w_i$, together with the following constraints, where $M$ is any constant strictly greater than the right-hand side of the first inequality (\cite{soltysik2010two}) in Eq.~\eqref{eq:flow_lin_b}:

\begin{subequations}
\begin{align}
& && f_{uv} = \sum_{i \in \{1,\dots,k\}} \pi_{uvi}, && \forall (u,v) \in E, \label{eq:flow_lin_a}\\
& && \pi_{uvi} \leq M x_{uvi}, && \forall (u,v) \in E, \forall i \in \{1,\dots,k\},\label{eq:flow_lin_b}\\
& && \pi_{uvi} \leq w_i, && \forall (u,v) \in E, \forall i \in \{1,\dots,k\},\label{eq:flow_lin_c}\\
& && \pi_{uvi} \geq w_i - (1-x_{uvi})M, && \forall (u,v) \in E, \forall  i \in \{1,\dots,k\}.\label{eq:flow_lin_d}
\end{align}
\label{int_lin2}
\end{subequations}

Indeed, in inequality \eqref{eq:flow_lin_b}, for $x_{uvi}=1$, the constraint 
should be bounded by the flow value $f_{uv}$.
If $x_{uvi}=0$, then the constraint is not binding, therefore the left side can be any value that is smaller or equal to zero. For constraint \eqref{eq:flow_lin_d}, when $x_{uvi}=1$, the value of $M$ is irrelevant; when $x_{uvi}=0$, the 
constraint is not binding as long as $M$ is a positive number. Therefore, the best value of $M$ is $M = f_{uv}$. 
\end{remark}

\section{ILP Formulations for MFD in Graphs with Cycles}
\label{ilp}

\subsection{$k$-Flow Decomposition into Paths or Cycles}
\label{pc}

In this section we tackle the $k$-FDPC problem. We start as in the standard formulation, by using the same binary variables $x_{uvi}$ for every edge $(u,v) \in E$ and every $i \in \{1,\dots,k\}$. Since paths and cycles can visit any node at most once, we adapt the flow conservation constraint \eqref{eqn:taccari-fc} into the following constraints:
\begin{subequations}
\label{eq:path_flow}
\begin{align}
& && \sum_{(u,v) \in E} x_{uvi} \leq 1 && \forall u \in V, \forall i \in \{1, \ldots, k\}, \label{eq:pc:s_cons}\\
& && \sum_{(u,v) \in E} x_{uvi} - \sum_{(v,w) \in E} x_{vwi} = 0 &&  \forall i \in \{1, \ldots, k\}, \forall v \in V \setminus \{s, t\}.\label{eq:pc:fc}
\end{align}
\end{subequations}

However, if $G$ is not a DAG, the binary variables $x_{uvi}$ could for example induce an $s$-$t$ path and one or more cycles (not even reachable from the $s$-$t$ path). 

To overcome this, we start from a formulation of an $s$-$t$ path in general graphs, presented in~\cite{taccari2016integer} as the \emph{sequential formulation}, and attributed to~\cite{miller1960integer}. This formulation introduces a positive integer variable $d_v$ for every node $v \in V$, which will stand for the position of $v$ in the path. The following constraint is introduced for every edge:
\begin{align}
\label{eq:taccari-sequential}
    d_v \geq d_u + 1 + (n-1)(x_{uv} - 1) && \forall (u,v) \in E.
\end{align}

This assigns a partial order to the nodes, such that for any edge $(u,v)$ with $x_{uv} = 1$, we have $d_u < d_v$. We can also think of the $d_v$ variables as a \emph{pseudo-distance} function, where the true distance from $s$ to $t$ is obtained by minimizing $d_t$. However, constraint \eqref{eq:taccari-sequential} as is forbids cycles altogether, which is not the goal in the $k$-FDPC problem.

\begin{figure}
    \centering
    \includegraphics[width=1.0\textwidth]{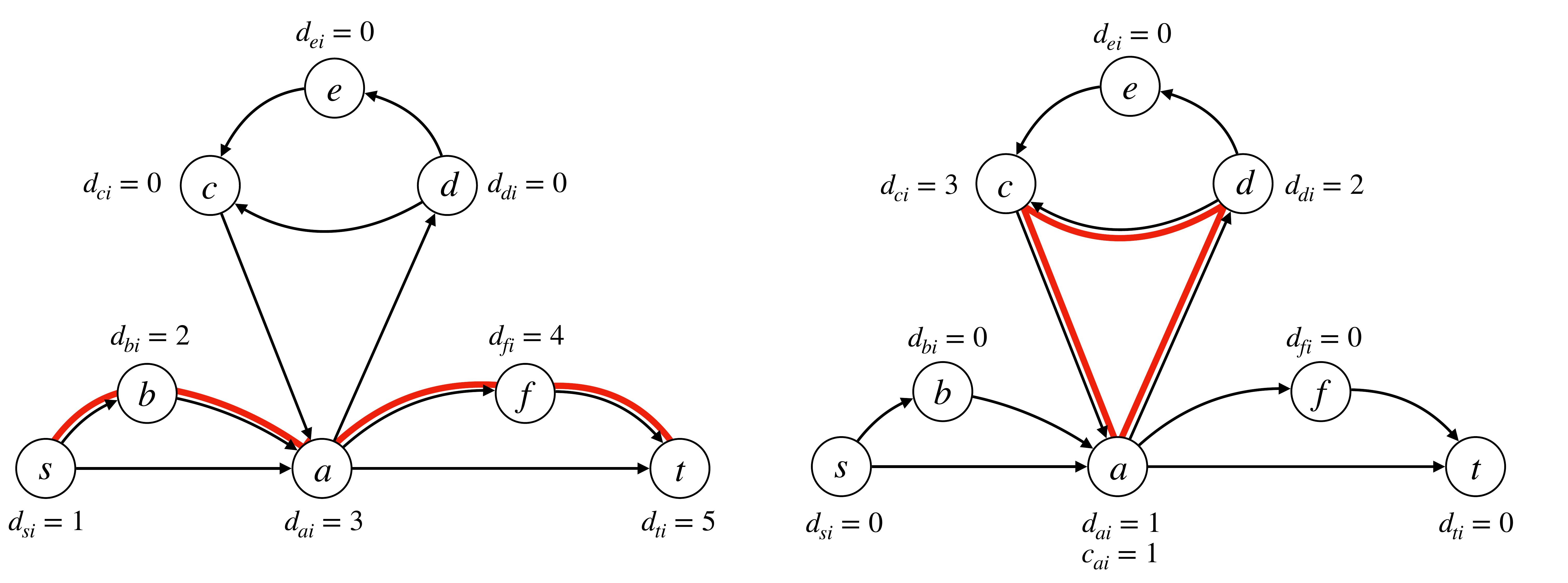}
    \caption{
    Example of the edge variables $x_{uvi}$ (where we draw edge $(u,v)$ in red if $x_{uvi} = 1$, and in black otherwise), satisfying \cref{eq:pc:fc,eq:pc:s_cons}. On the left, the assignments of the variables $d_{vi}$ 
    shown in the figure (and $c_{vi} = 0$ for all $v \in V$) satisfy constraint \eqref{eq:sequential+cycle}. On the right, the assignments of the variables $d_{vi}$ shown in the figure (and $c_{vi} = 1$ for all $v \in V \setminus \{a\}$, and $c_{ai} = 1$) satisfy constraint \eqref{eq:sequential+cycle}. Note that the only edge $(u,v)$ such that $d_{vi} < d_{ui}$ is $(c,a)$. However, constraint \eqref{eq:sequential+cycle} holds for $(c,a)$ because $c_{ai} = 1$.
    }
    \label{fig:fdpc}
\end{figure}

In order to handle both paths and cycles, we modify the sequential formulation constraint \eqref{eq:taccari-sequential} as follows. For every $v \in V \setminus \{s,t\}$, we introduce a binary variable $c_{vi}$ to denote whether node $v$ is a ``start of cycle'' node in the $i$-th element of the flow decomposition. We also introduce positive integer variables $d_{vi}$, and change \eqref{eq:taccari-sequential} into the following constraint:
\begin{align}
\label{eq:sequential+cycle}
    d_{vi} \geq d_{ui} + 1 + (n-1)(x_{uvi} - 1 - c_{vi}) && \forall (u,v) \in E, \forall i \in \{1,\ldots,k\}.
\end{align}
Namely, this constraint imposes that either a node is the starting point of a cycle or its order in the path is always increasing. Finally, we impose:
\begin{align}
\label{eq:paths+cycles-together}
    \sum_{v\in V} x_{svi} + \sum_{v\in V} c_{vi} \leq 1 && \forall i \in \{1,\dots,k\}.
\end{align}
That is, for every element $i$ of the decomposition, either there is some edge $(s,v)$ with $x_{svi} = 1$ (in which case we have just an $s$-$t$ path, and no cycle), or $c_{vi} = 1$ for some $v$ (or none of these hold) (see \Cref{fig:fdpc}). Because of constraint \eqref{eq:pc:fc}, if $\sum_{v\in V} x_{svi} = 1$, then the binary $x_{uvi}$ variables induce just an $s$-$t$ path and no other cycles. If $c_{vi} = 1$, this means that we have no $s$-$t$ path and just one cycle (or none at all, if all $x_{uvi}$ variables are zero). Finally, observe that adding constraint \eqref{eq:paths+cycles-together} we obtain that  constraint \eqref{eq:pc:s_cons} is always satisfied, since otherwise, we have either at least two $s$-$t$ paths passing through $u$ in some element $i$ of the decomposition (and thus $\sum_{v\in V} x_{svi} \geq 2$), or two cycles passing through $u$ in some element $i$ of the decomposition (and thus $\sum_{v\in V} c_{vi} \geq 2$). 

The complete ILP formulation for this problem variant, which we give as Model~\ref{mod:pc} in \ref{sec:k-fdpc}, has $O((|V|+|E|)k)$ variables and constraints.

\begin{remark}
The formulation for $k$-FDPC presented here is capable of decomposing the flow into up to $k$ paths or cycles. However, one can easily obtain a formulation where there are \emph{exactly} $k$ elements in the decomposition, as follows. The constraint in \eqref{eq:paths+cycles-together} allows the number of total cycles and paths to be up to $k$, by allowing either terms of the left-hand side to be at most 1. Therefore, this constraint needs to be changed to:
\begin{align}
\label{eq:paths+cycles-together2}
    \sum_{v\in V} x_{svi} + \sum_{v\in V} c_{vi} = 1 && \forall i \in \{1,\dots,k\}.
\end{align}
\end{remark}

\subsection{$k$-Flow Decomposition into Trails via Constraint Generation}
\label{pt}

In this section we tackle the next problem, $k$-FDT, namely finding a $k$-flow decomposition into at most $k$ $s$-$t$ trails. Recall from the previous sections that the binary $x$ variables induce a flow. The only case when such flow does not correspond to an $s$-$t$ trail is when it has a strongly connected component that is not reachable from $s$, or equivalently, that does not reach $t$.

We start with a simple lemma stating that, due to flow conservation, this global reachability property can be reduced to checking whether any strongly connected component $C$ induced by the edges of the trail has an edge out-going from $C$ (i.e., an edge $(u,v)$ with $x_{uvi} = 1$, such that $u$ appears in $C$, but $(u,v)$ does not belong to $C$). For a strongly connected component $C$ of $G$, we denote by $E(C)$ the set of edges of $C$ and by $\delta^+(C)$ the set of edges $(u,v)$ such that $u$ belongs to $C$.

\begin{lemma}
\label{lem:trails-outgoing-edge}
Let $G = (V,E)$ be an arbitrary graph with source and sink nodes $s$ and $t$. Let $W$ be a set of edges of $G$, where for every edge $(u,v) \in E$ we define $W(u,v) := 1$, if $(u,v) \in W$, and $W(u,v) := 0$ otherwise. It holds that the edges of $W$ can be ordered to obtain a $s$-$t$ trail passing through each edge of $W$ if and only if the following conditions hold:
\begin{enumerate} 
    \item For every $v \in V$,
        $\displaystyle\sum_{(u,v) \in E} W(u,v) - \sum_{(v,u) \in E} W(v,u) = 
        \begin{cases}
        0, & \text{if $v \in V \setminus \{s,t\}$}, \\
        1, & \text{if $v = t$}, \\
        -1, & \text{if $v = s$}.
        \end{cases}$
    \item For any strongly connected component $C$ of the subgraph of $G$ induced by the edges in $W$ and different from $\{t\}$, $\delta^+(C) \setminus E(C) \cap W$ is non-empty.    
\end{enumerate}
\end{lemma}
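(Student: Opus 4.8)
The plan is to prove both directions of the equivalence, treating condition~1 (flow conservation of the indicator $W$) as the easy necessary part and condition~2 (no ``trapped'' strongly connected component) as the substantive one. For the forward direction, suppose the edges of $W$ can be ordered into an $s$-$t$ trail $T$. Flow conservation (condition~1) is immediate: traversing $T$ from $s$ to $t$, every intermediate visit to a node $v$ enters along one edge of $W$ and leaves along another, so the in- and out-counts balance at all $v \notin \{s,t\}$, while $s$ has one excess out-edge and $t$ one excess in-edge; since $T$ uses each edge of $W$ exactly once, these counts equal the sums in condition~1. For condition~2, take any strongly connected component $C$ of the subgraph induced by $W$ with $C \neq \{t\}$. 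Since $C \neq \{t\}$ and the trail reaches every edge of $W$, some node of $C$ is visited by $T$ strictly before $t$; following $T$ forward from the last edge of $T$ inside $E(C) \cap W$, the next edge leaves $C$ (it cannot re-enter $E(C)$, as that edge was the last one in $C$, and $T$ does not end inside $C$ because $C \neq \{t\}$), which gives an edge in $\bigl(\delta^+(C) \setminus E(C)\bigr) \cap W$, i.e.\ that set is non-empty.

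For the converse, assume conditions~1 and~2 hold; I would build the $s$-$t$ trail constructively. The natural approach is an Eulerian-type argument on the subgraph $H$ induced by $W$: by condition~1, every node of $H$ other than $s$ and $t$ is balanced (in-degree equals out-degree counting each edge of $W$ once), $s$ has out-degree exceeding in-degree by one, and $t$ has in-degree exceeding out-degree by one. Adding a temporary edge $(t,s)$ makes every node balanced in $H' := H + (t,s)$. Now the obstruction to $H'$ having a single Eulerian circuit through the component containing $s$ is that $H'$ may be disconnected: the other weak components of $H'$ are exactly balanced subgraphs disjoint from $s$ and $t$, and I claim each such component is a single strongly connected component $C$ of $H$ with $\bigl(\delta^+(C)\setminus E(C)\bigr)\cap W = \emptyset$, contradicting condition~2 unless no such component exists. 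To see the claim, observe that a weak component of $H'$ not containing $s$ or $t$ has all its edges in $W$, every node balanced, hence (being a connected balanced digraph) strongly connected; and it has no $W$-edge leaving it, so it is a union of strongly connected components of $H$ with no outgoing $W$-edges — but flow conservation forces such a maximal trapped set to itself be strongly connected (any ``last'' SCC of it in the condensation, with no outgoing edge, would still be balanced and thus the whole thing collapses), so it is a single SCC $C \neq \{t\}$ violating condition~2. Hence $H'$ is weakly connected, every node is balanced, so $H'$ has an Eulerian circuit; starting that circuit at $s$, removing the temporary edge $(t,s)$, and rotating so the traversal begins at $s$ and ends at $t$ yields an $s$-$t$ trail using each edge of $W$ exactly once.

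The step I expect to be the main obstacle is the structural claim in the converse: that a weakly connected, balanced sub-digraph all of whose edges lie in $W$ and which has no $W$-edge leaving it must be a \emph{single} strongly connected component of $H$, so that condition~2 (phrased per SCC) actually applies to it. This requires the small lemma that in the condensation of such a balanced digraph, a source SCC (one with no incoming edges) is impossible unless the condensation is trivial — equivalently, that following edges forward from any node you can always reach a ``sink'' SCC, which by balance and the trapped-ness has no outgoing edges, and then arguing by flow conservation along any cut that the whole trapped piece must coincide with that one SCC. I would isolate this as a short auxiliary observation (essentially: a balanced digraph with no outgoing edges across any proper subset is strongly connected), prove it by a cut/counting argument on the edge indicator $W$, and then the rest of the converse is a standard Eulerian-circuit application. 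One should also double-check the boundary cases $s = t$ is excluded by the flow-network setup, and that the degenerate $W = \emptyset$ (only possible if $s$ has no outgoing flow, i.e.\ the trivial trail) is handled, but these are routine.
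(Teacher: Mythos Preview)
Your proof is correct and follows essentially the same route as the paper: the forward direction is the paper's ``last edge of $T$ lying in $C$'' argument (the paper separates out the single-vertex SCC case, which you should too, since for $C=\{v\}$ one has $E(C)\cap W=\emptyset$ and your ``last edge inside $E(C)$'' does not exist---take instead the last visit of $T$ to $v$), and the reverse direction identifies an isolated strongly connected component as the sole obstruction. Your reverse direction is in fact more explicit than the paper's, which simply asserts that a unit $s$-$t$ flow decomposes as one $s$-$t$ trail plus isolated SCCs; your Eulerian-circuit construction via the auxiliary edge $(t,s)$ and the weak-connectivity argument supplies the proof of that assertion.
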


\begin{proof}
For the forward implication, suppose the edges of $W$ can be ordered to obtain an $s$-$t$ trail $T$ that satisfies condition 1. To see that $W$ satisfies also condition 2., let $C$ be a strongly connected component of the subgraph of $G$ induced by the edges of $W$, different from $\{t\}$. First, note that node $t$, being a sink, does not belong to $C$. 

If $C$ contains a single node, say $v \neq t$, then the edge of the trail out-going from $v$ belong to $\delta^+(C) \setminus E(C) \cap W$. Otherwise, let $(u,v)$ be the edge of $C$ that is last in the order given by the trail $T$. Say that $(v,w)$ is the edge following $(u,v)$ in $T$ (which exists because $v \neq t$). We have that $(v,w) \in \delta^+(C)$ (since $v$ belong to $C$), $(v,w) \notin E(C)$ (since $(u,v)$ is the last edge of $C$ in the order given by $T$, and $(u,v) \in W$, since $(u,v)$ is an edge of the trail. Thus, the set $\delta^+(C) \setminus E(C) \cap W$ is non-empty, and $W$ satisfies Condition 2.

For the reverse implication, suppose $W$ satisfies the two conditions. Since the values $W(u,v)$ induce a flow in $G$, it remains to prove that this flow is decomposable into a \emph{single} $s$-$t$ trail. Since $\sum_{(u,t) \in E} W(u,t) - \sum_{(t,u) \in E} W(s,u) = 1$, the only way this does not hold is when $W$ also contains a set of edges forming a strongly connected component, and no node in this component reaches $t$. This means that all edges out-going from the nodes of $C$ are to other nodes of $C$ (by flow conservation, and by the fact that $t$ is not reachable from $C$). However, this contradicts the second condition. See the example in \Cref{fig:ex_trail}. \qed
\end{proof}

\begin{figure}
\centering
\begin{subfigure}[c]{0.45\textwidth}
\includegraphics[width=\textwidth]{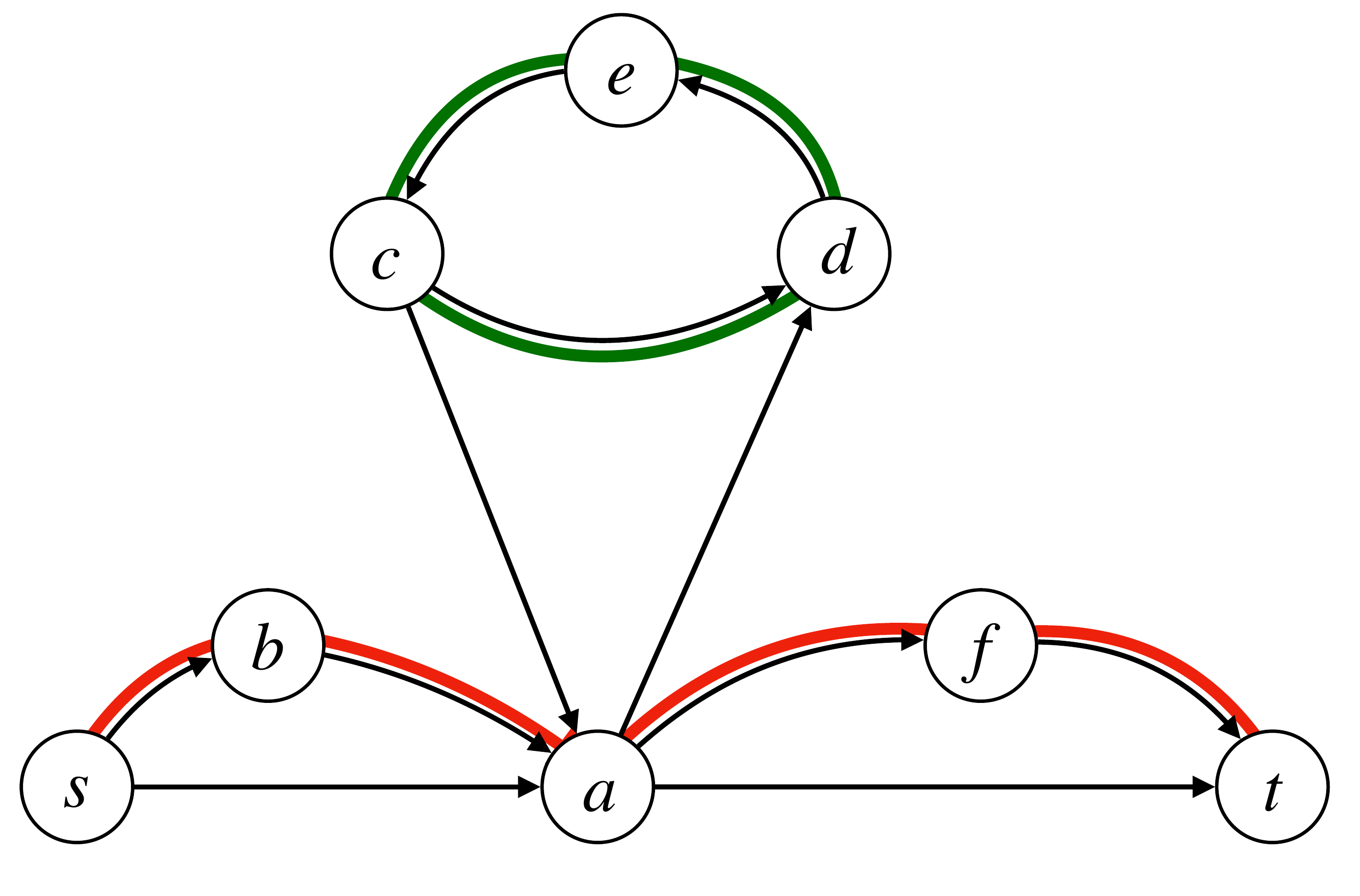}
\caption{$W_1$}
\label{iter_trailA}
\end{subfigure}
\begin{subfigure}[c]{0.45\textwidth}
\includegraphics[width=\textwidth]{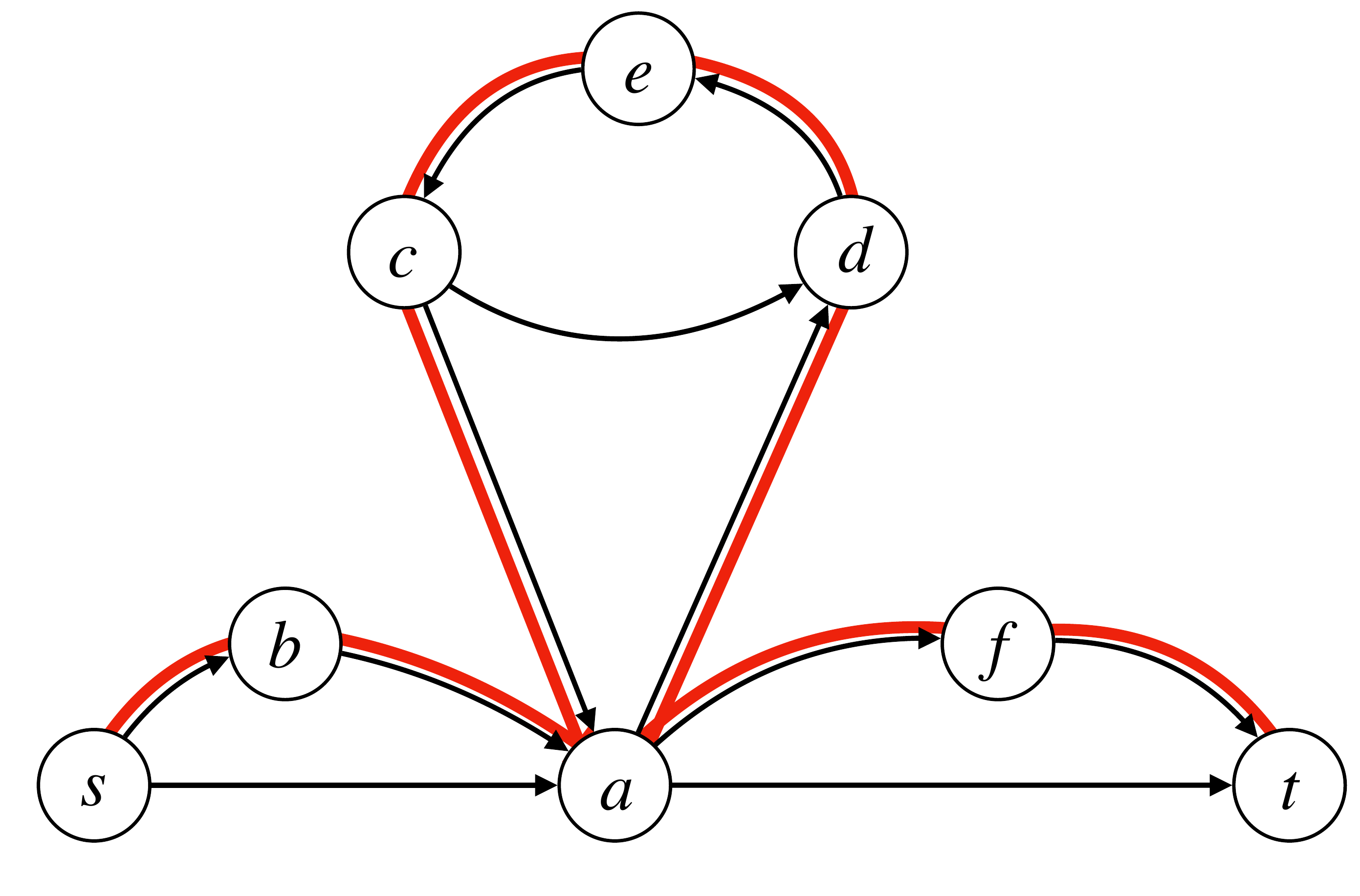}
\caption{$W_2$}
\label{iter_trailB}
\end{subfigure}
\caption{The graph from~\Cref{fig:FD}, with the edge $(d,c)$ reversed. This creates the cycle $(c,d,e,c)$ in the graph, which can lead to a set $W_1$ of edges (on the left, with edges in both red and green), satisfying condition 1. of \Cref{lem:trails-outgoing-edge}, but with a strongly connected component (in green) violating condition 2. of \Cref{lem:trails-outgoing-edge}. Instead, the set $W_2$ (on the right, with edges in red) satisfies both conditions of \Cref{lem:trails-outgoing-edge}.\label{fig:ex_trail}}
\end{figure}

As in the standard formulation, for every edge $(u,v) \in E$ and for every $i \in \{1,\dots,k\}$, we introduce a binary variable $x_{uvi}$ that will equal 1 if and only if the $s$-$t$ trail $W_i$ passes through $(u,v)$ (since trails can also visit edges at most once). The first condition of \Cref{lem:trails-outgoing-edge} is the same as the flow conservation condition in \Cref{eqn:taccari-fc}. However, as in the previous formulation, since we are modeling up to $k$ trails, we impose it in its weaker form below. (At the end of this section (\Cref{rem:trails-exactly-k}) we explain how to model \emph{exactly} $k$ trails.)

\begin{align}
& && \sum_{(s,v) \in E} x_{svi} \leq 1 && \forall i \in \{1, \ldots, k\},\label{eq:flow_trails_1}\\
& && \sum_{(u,v) \in E} x_{uvi} - \sum_{(v,w) \in E} x_{vwi} = 0 &&  \forall i \in \{1, \ldots, k\}, \forall v \in V \setminus \{s, t\}.\label{eqn:flow-conservation-trail}
\end{align}

To model the second condition of \Cref{lem:walks-reachability}, a first inefficient method would be to first exhaustively enumerate possible strongly connected subgraphs of $G$; their number is finite, since $G$ has a finite number of nodes. For each such strongly connected subgraph $C$, add constraints to prevent it from not reaching $t$, via the second condition of \Cref{lem:trails-outgoing-edge}. The following constraint models the second condition of \Cref{lem:walks-reachability} for a given strongly connected component $C$, where $|C|$ denotes the number of edges of $C$ (see \Cref{remark:linearize-trails-elim} on how to linearize this constraint):

\begin{align}
&\sum_{(u,v) \in E(C)} x_{uvi} = |C|  \rightarrow \sum_{(u,v) \in \delta^+(C) \setminus E(C)} x_{uvi} \geq 1 && \forall i \in \{1,\dots,k\}. \label{eq:trails-elim}
\end{align}

In order to avoid exhaustively enumerating all possible strongly connected subgraphs $C$ of $G$, and imposing constraint \eqref{eq:trails-elim} for each such $C$, we can consider the following iterative procedure, in which  constraint \eqref{eq:trails-elim} is not imposed at the beginning, but it is iteratively imposed for any strongly connected component induced by the variables $x_{uvi}$ not satisfying the constraint.

More precisely, we can consider a parameterized version of our ILP denoted as $k$-FDC($\mathcal{C}$) made up of the constraints in Eq.~\eqref{eqn:ilp_flow_eq},\eqref{eq:flow_trails_1},\eqref{eqn:flow-conservation-trail},\eqref{eq:trails-elim}, where $\mathcal{C}$ is a set of strongly connected components for which constraint \eqref{eq:trails-elim} is imposed. Initially, $\mathcal{C}$ is empty. For any strongly connected component $C$ induced by the variables $x_{uvi}$ not satisfying \eqref{eq:trails-elim}, we add $C$ to $\mathcal{C}$, and run the ILP again. This procedure eventually stops, because the number of strongly connected subgraphs of $G$ is finite, as mentioned above. We summarize this procedure also as \cref{algo:trail} (see also \Cref{fig:Trails_Iter}). This iterative addition of constraints to an initial formulation is known in the literature as a \emph{constraint generation} (or \emph{cut generation}), and it is based on the cycle elimination technique first proposed by \cite{dantzig1954solution} for the Traveling Salesman Problem.

The full ILP formulation for $k$-FDT($\mathcal{C}$), which we give as Model~\ref{mod:pt} in \ref{sec:k-fdt}, has $O((|E|+|\mathcal{C}|)k)$ variables and $O((|V|+|V|+|\mathcal{C}|)k)$ constraints.

\begin{remark}
\label{remark:linearize-trails-elim}
In order to linearize \eqref{eq:trails-elim}, we first rewrite it as the following disjunction:
\begin{align}
& \sum_{(u,v) \in E(C)} x_{uvi} \leq |C| - 1  \vee \sum_{(u,v) \in \delta^+(C) \setminus E(C)} x_{uvi} \geq 1 && \forall C \in \mathcal{C}, \forall i \in \{1,\dots,k\}.
\label{eq:trails-disjunction}
\end{align}

To linearize this disjunction, we introduce the binary variables $\beta_{Ci}$, for the strongly connected component $C$, and each $i \in \{1,\dots,k\}$. Following standard techniques (as applied in \cref{rem:lin}), let $M$ be any constant strictly greater than the right-hand side of the first inequality in \eqref{eq:trails-disjunction}. We can rewrite \eqref{eq:trails-disjunction} as the following three constraints:

\begin{subequations}
\begin{align}
&\sum_{(u,v) \in E(C)} x_{uvi} \geq |C| - M(1-\beta_{Ci})  && \forall C \in \mathcal{C}, \forall i \in \{1,\dots,k\},\label{eq:m1}\\
& \sum_{(u,v) \in E(C)} x_{uvi} - |C| + 1 - M\beta_{Ci} \leq 0 && \forall C \in \mathcal{C}, \forall i \in \{1,\dots,k\},\label{eq:m2}\\ 
& \sum_{(u,v) \in \delta^+(C) \setminus E(C)} x_{uvi} \geq \beta_{Ci} && \forall C \in \mathcal{C}, \forall i \in \{1,\dots,k\},\\ 
& \beta_{Ci} \in \{0,1\}  && \forall C \in \mathcal{C}, \forall i \in \{1,\dots,k\}. \label{eq:m4}
\end{align}
\label{eq:sub_trail}
\end{subequations}

For $M$, a suitable value would be $M = |C|$. In \cref{eq:m1} and \cref{eq:m2}, the left side of each constraint should be not bounded when $\beta_{Ci}=1$ and $\beta_{Ci}=0$, respectively. Hence, the smallest number that guarantee this condition is $M = |C|$.
\end{remark}

\begin{remark}
The check in \Cref{algo:trail} for strongly connected components not satisfying constraint \eqref{eq:trails-elim} can be implemented efficiently (in linear time), as follows. Consider the graph $G'$ where we add all the edges $(u,v)$ of $G$ where the $x_{uvi}$ variables are set to 1, plus the additional edge $(t,s)$. Compute the strongly connected components of $G'$, doable in linear time~\cite{tarjan1972depth}. Using arguments as in the proof of \Cref{lem:trails-outgoing-edge}, one can easily see that $G'$ has only one strongly connected component if and only if the check in \Cref{algo:trail} is false (i.e., the $x_{uvi}$ variables do not induce any strongly connected component $C$ violating~\eqref{eq:trails-elim}).
\end{remark}

\begin{algorithm}
\allowdisplaybreaks
\SetKw{kwContinue}{continue}
\SetKw{kwReturn}{return}
\SetKw{kwInfeasible}{infeasible}
\SetKw{kwFeasible}{feasible}
\SetKwRepeat{Do}{repeat}{until}
\KwIn{Flow network $G=(V,E,f)$, and integer $k$}
\KwOut{A decomposition into at most $k$ trails with associated weights}
$\mathcal{C} \leftarrow \emptyset$\\
\Do{True}{
Solve $k$-FDT($\mathcal{C}$)\\
\If{\kwInfeasible}{ 
    \kwReturn \kwInfeasible
}  
 
\eIf{variables $x_{uvi}$ induce a strongly connected component $C$ not satisfying constraint~\eqref{eq:trails-elim}}{ 
        $\mathcal{C} \leftarrow \mathcal{C} \cup\{C\}$
}
{
    \kwReturn the trails and their associated weights induced from the $x$ and $w$ variables
}
}
  
\caption{Algorithm for $k$-Flow Decomposition into Trails via Constraint Generation} 
\label{algo:trail}
\end{algorithm}

\begin{figure}
\centering
\begin{subfigure}[c]{0.32\textwidth}
\includegraphics[width=\textwidth]{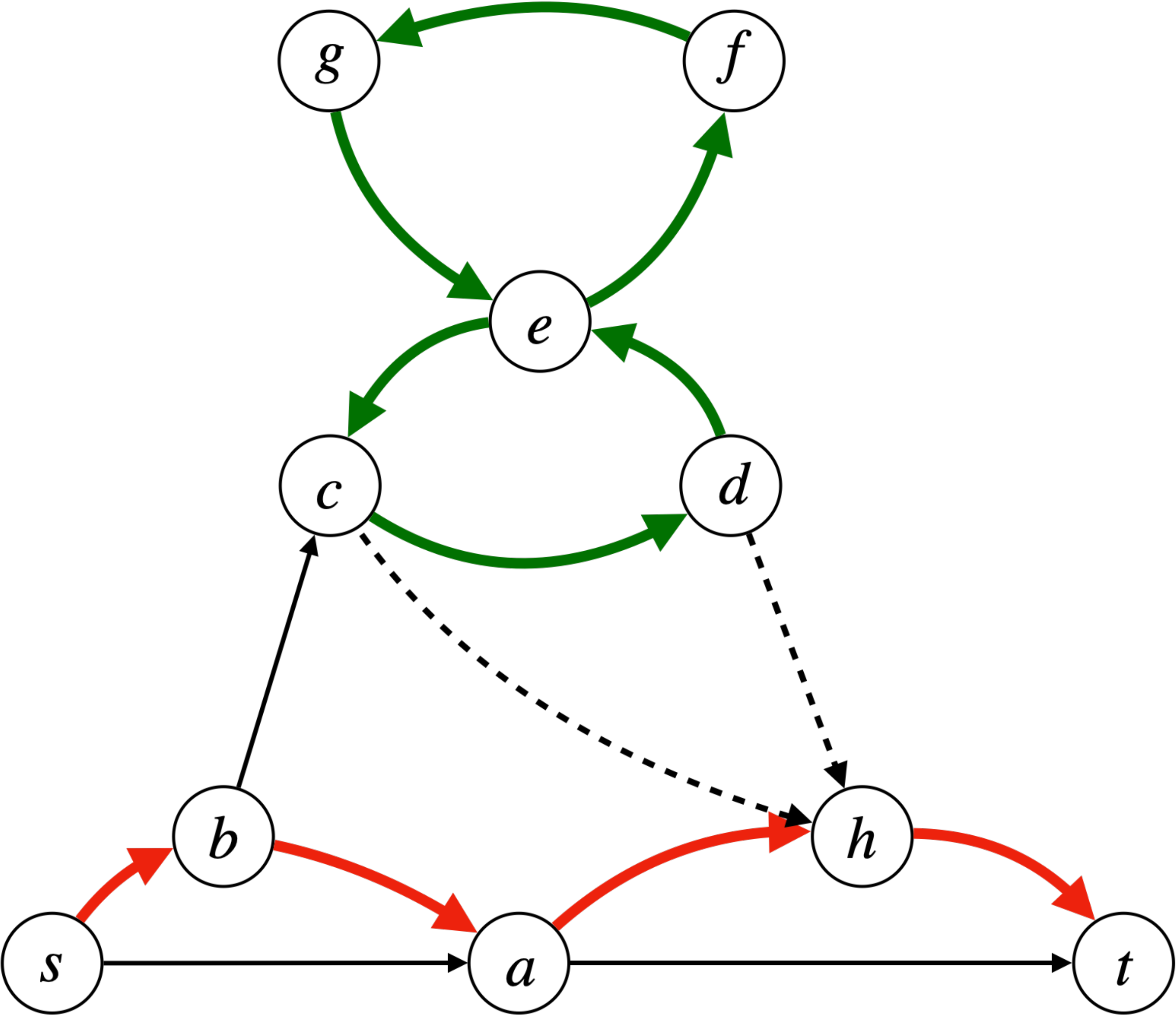}
\caption{First iteration}
\label{iter1}
\end{subfigure}
\hfill
\begin{subfigure}[c]{0.32\textwidth}
\includegraphics[width=\textwidth]{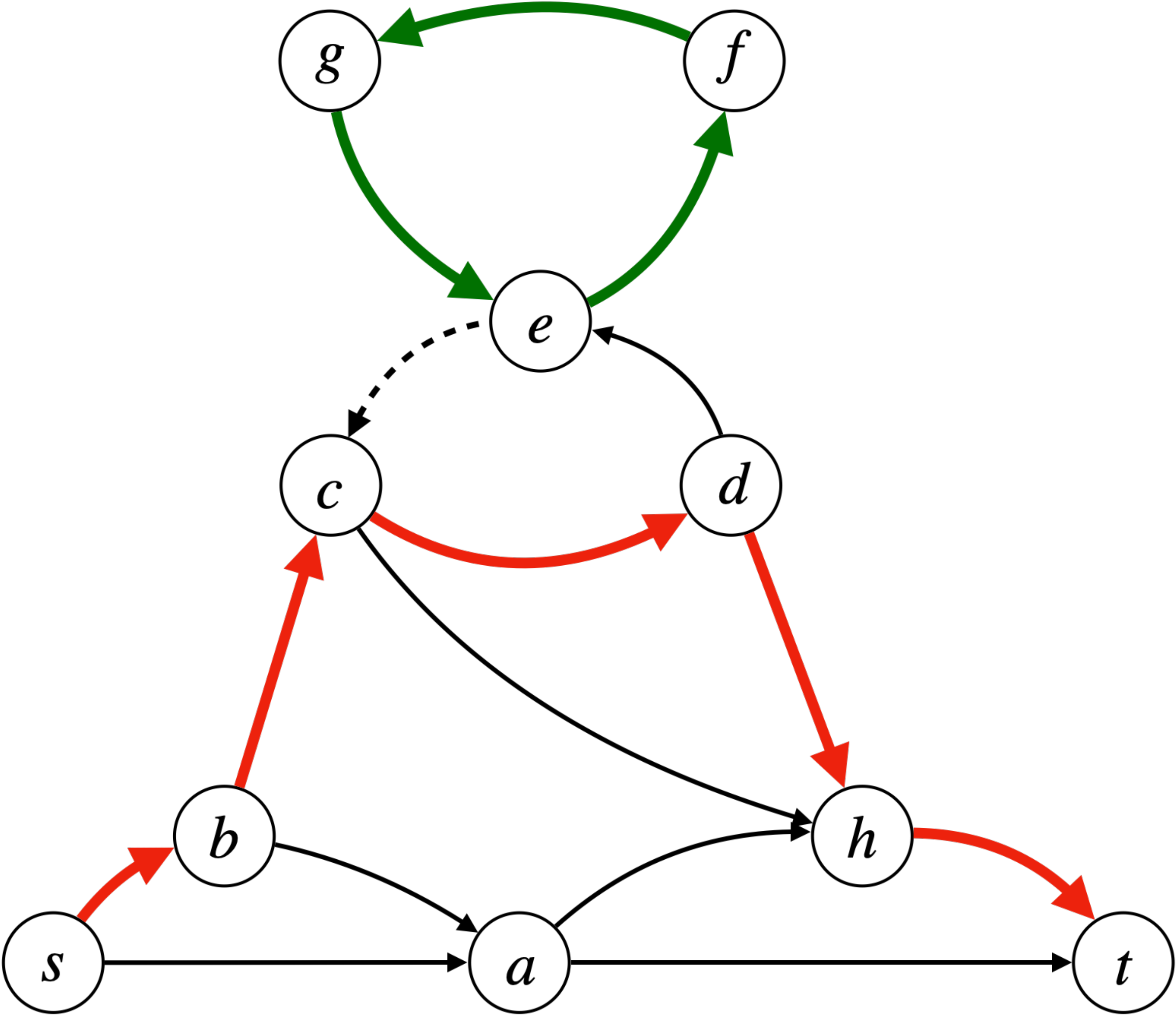}
\caption{Second iteration}
\label{iter2}
\end{subfigure}
\hfill
\begin{subfigure}[c]{0.32\textwidth}
\includegraphics[width=\textwidth]{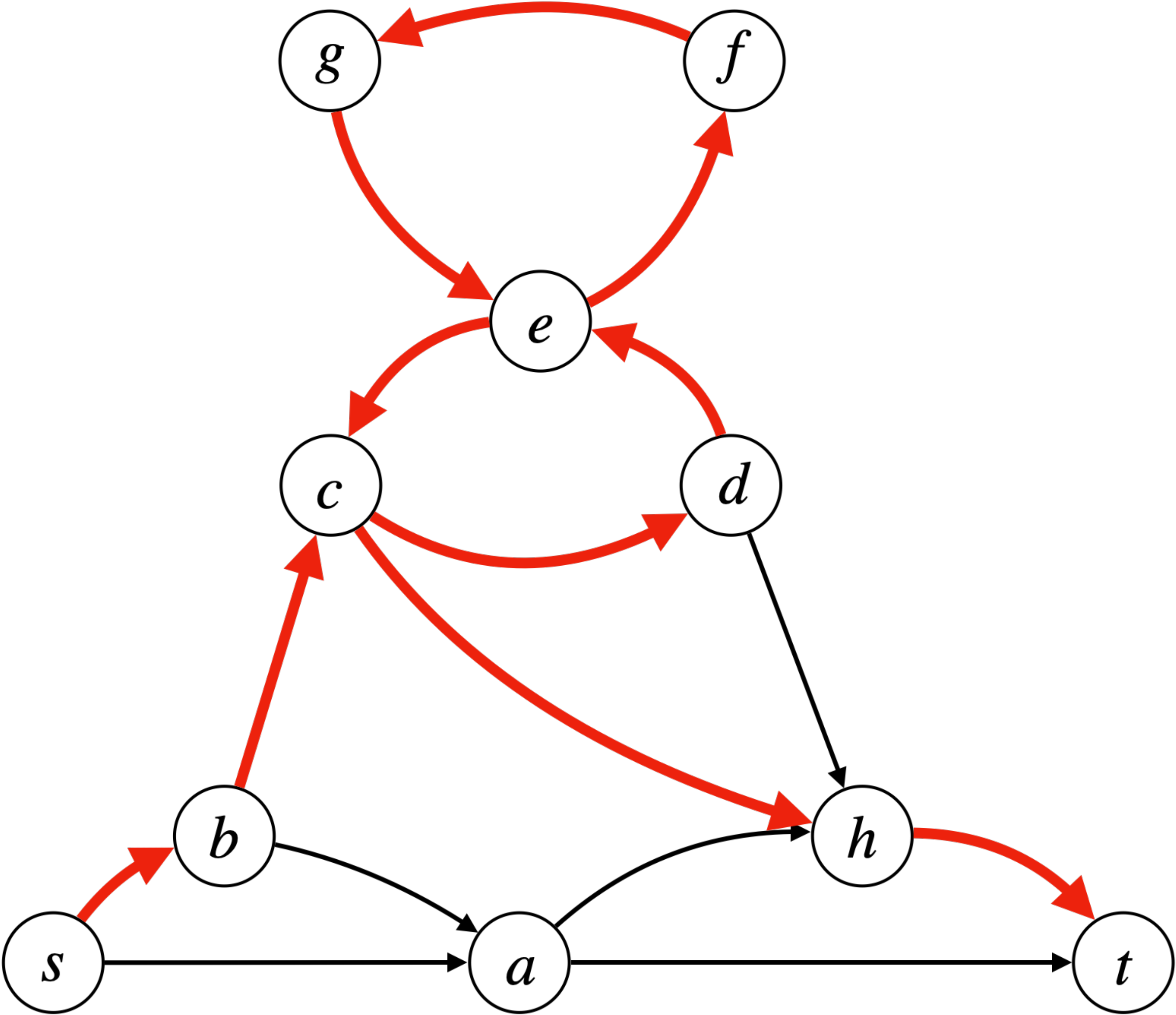}
\caption{Third iteration}
\label{iter3}
\end{subfigure}
\caption{
Illustration of the iterative process described in Algorithm \ref{algo:trail}. 
The edges in red form an $s$-$t$ trail while the edges in green from a strongly connected component that does not reach $t$ via edges of the trail (i.e. not satisfying constraint~\eqref{eq:trails-elim}). We assume that in all three iterations, the flow conservation and flow decomposition constraints are valid. In iteration one (Figure \ref{iter1}), there is a strongly connected component (in green). Based on the constraints imposed in Eq.~\eqref{eq:trails-elim}, at least one the outgoing edges from the strong connected component (dashed edges) needs to be selected. In Figure \ref{iter2}, after imposing those constraints, the original strongly connected component does not exist anymore. However, a new one is present in the solution and the same set of constraints need to imposed also for it. In Figure \ref{iter3}, the solution corresponds to an $s$-$t$ trail, where no isolated strongly connected component is present.}
\label{fig:Trails_Iter}
\end{figure}

\begin{remark}
\label{rem:trails-exactly-k}
In order to allow the formulation presented in this section to obtain a $k$-flow decomposition into \emph{exactly} $k$ $s$-$t$ trails, we need to change the flow conservation constraints in \Cref{eqn:flow-conservation-trail} and \Cref{eq:flow_trails_1} into its basic form as in \Cref{eqn:taccari-fc}:
\begin{equation}
\label{eqn:flow-conservation-trail2}
    \sum_{(u,v) \in E} x_{uvi} - \sum_{(v,u) \in E} x_{vui} = 
    \begin{cases}
    0, & \text{if $v \in V \setminus \{s,t\}$}, \forall i \in \{1,\dots,k\} \\
    1, & \text{if $v = t$}, \forall i \in \{1,\dots,k\}\\
    -1, & \text{if $v = s$}, \forall i \in \{1,\dots,k\}.
    \end{cases}
\end{equation}
\end{remark}

\subsection{$k$-Flow Decomposition into Trails and Walks}
\label{pw}

In this section we give a formulation that works for both problems $k$-FDT and $k$-FDW. Our formulations will be based on the following characterization of an $s$-$t$ walk, stating that all its nodes must be reachable from $s$, using only edges of the $s$-$t$ walk.

\begin{lemma}
\label{lem:walks-reachability}
Let $G = (V,E)$ be an arbitrary graph with source and sink nodes $s$ and $t$. Let $W$ be a multiset of edges of $G$,  where for every edge $(u,v) \in E$ we denote by $W(u,v)$ the number of times $(u,v)$ appears in the multiset $W$. It holds that the edges of $W$ can be ordered to obtain a $s$-$t$ walk passing $W(u,v)$ times through each edge $(u,v)$ if and only if the following conditions hold:
\begin{enumerate} 
    \item For every $v \in V$,
        $\displaystyle\sum_{(u,v) \in E} W(u,v) - \sum_{(v,u) \in E} W(v,u) = 
        \begin{cases}
        0, & \text{if $v \in V \setminus \{s,t\}$}, \\
        1, & \text{if $v = t$}, \\
        -1, & \text{if $v = s$}.
        \end{cases}$
    \item For every node $v \in V$ appearing in some edge of $W$, there is an $s$-$v$ path using only edges in $W$ (i.e.~$v$ is reachable from $s$ using edges in $W$).
\end{enumerate}
\end{lemma}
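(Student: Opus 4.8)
The plan is to prove \Cref{lem:walks-reachability} by establishing both directions separately, treating the multiset $W$ as inducing a flow (by condition~1) and then analyzing its connectivity structure.

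\textbf{Forward direction.} First I would assume the edges of $W$ can be ordered into an $s$-$t$ walk $W = e_1 e_2 \cdots e_\ell$. Condition~1 is immediate: traversing the walk, every intermediate visit to a node $v \neq s,t$ enters and exits $v$, so the number of occurrences of edges into $v$ in $W$ equals the number out of $v$; at $s$ the walk starts (one extra outgoing) and at $t$ it ends (one extra incoming), which is exactly the conservation-of-flow pattern stated. For condition~2, take any node $v$ appearing in some edge of $W$. Since $v$ lies on the walk, there is a first index $j$ at which the walk reaches $v$; the prefix $e_1 \cdots e_{j}$ is a walk from $s$ to $v$ using only edges of $W$, and by taking the first visit we may even shortcut it to a simple $s$-$v$ path, all of whose edges are among the edges of $W$. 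Hence $v$ is reachable from $s$ using edges in $W$.

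\textbf{Reverse direction.} Now I would assume $W$ satisfies conditions~1 and~2 and must produce a single $s$-$t$ walk using exactly the multiset $W$. The key observation is that condition~1 says the values $W(u,v)$ form a flow of value $1$ from $s$ to $t$ in $G$ (restricted to the edges actually used). By the standard flow-decomposition argument recalled in \Cref{sec:network-flows-and-decompositions}, this flow decomposes into one $s$-$t$ walk $P$ together with a collection of edge-disjoint cycles $C_1, \dots, C_r$ (all with weight-$1$ multiplicity, summing multiset-wise to $W$). It remains to splice all the cycles into $P$ to get a single $s$-$t$ walk consuming all of $W$. I would argue by induction on $r$: if $r = 0$ we are done. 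Otherwise, I claim some cycle $C_j$ shares a node with $P$ or with an already-spliced portion. Suppose not; then consider a cycle $C_j$ none of whose nodes lies on $P$ or on any other cycle reachable-chain from $P$. Condition~2 says every node of $C_j$ is reachable from $s$ via edges of $W$; take such a path $\pi$ from $s$ to a node $x$ of $C_j$. Walking $\pi$ from $s$, at some point it must leave the set of "already-reachable-via-$P$" nodes and first enter a node $y$ lying on some cycle not yet attached — but the edge of $\pi$ entering $y$ is an edge of $W$, hence belongs either to $P$ or to some cycle, giving a node shared between that cycle and a previously attached component, a contradiction. Concretely, one shows by induction on the length of a reachability path that every node appearing in $W$ lies in the same connected "component" as $s$ once we merge $P$ with every cycle sharing a node with it, transitively. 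Therefore every cycle $C_j$ can be inserted into the walk at a shared node (traverse the walk to that node, go around $C_j$, continue), and after all insertions we obtain a single $s$-$t$ walk using precisely the multiset $W$.

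\textbf{Main obstacle.} The routine part is condition~1 and the forward direction; the real content is the reverse direction's splicing argument, specifically showing that every cycle in the flow decomposition of $W$ is connected — through a chain of shared nodes — to the $s$-$t$ walk $P$. This is exactly where condition~2 (reachability from $s$) is essential and cannot be dropped: without it one could have an isolated strongly connected component of $W$ with balanced in/out degrees that condition~1 alone would permit. I would make this precise by defining the auxiliary subgraph on the support of $W$, observing that condition~2 forces this subgraph to be "weakly connected around $s$" in the sense that $s$ reaches every vertex of the support, and then noting that an edge-disjoint cycle in a flow decomposition whose vertex set is reachable from $s$ via support edges must touch the union of $P$ and the cycles already glued to it — otherwise the first support edge on a reachability path that enters the not-yet-glued part witnesses a shared vertex. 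Handling the bookkeeping of "glue cycles one at a time, updating the reachable-and-glued set" cleanly is the one place where care is needed; everything else follows standard flow-decomposition reasoning as in \citep{ahuja1988network}.
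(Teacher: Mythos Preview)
Your proposal is correct and follows essentially the same approach as the paper: both directions match, and for the reverse implication both you and the paper decompose the flow induced by $W$ and use condition~2 to rule out any isolated strongly connected piece not reachable from $s$. The only difference is presentational---the paper dispatches the reverse direction in two sentences by contradiction (an isolated component would violate reachability), whereas you spell out the constructive splicing of cycles into the base $s$-$t$ walk; your extra detail is sound and does not diverge from the paper's idea.
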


\begin{proof}
The forward implication is immediate: any $s$-$t$ walk satisfies the first condition of the lemma, and every node $v$ appearing in some edge of $W$ is reachable from $s$ via an $s$-$v$ path obtained by removing all cycles of the walk between $s$ and $v$.

For the reverse implication, we argue as in the proof of \Cref{lem:trails-outgoing-edge}. Suppose $W$ satisfies the two conditions. Since the values $W(u,v)$ induce a flow in $G$, it remains to prove that this flow is decomposable into a \emph{single} $s$-$t$ walk. Since $\sum_{(u,s) \in E} W(u,s) - \sum_{(s,u) \in E} W(s,u) = -1$, the only way this does not hold is when $W$ also contains a set of edges forming a strongly connected component, and no node in this component is reachable from $s$. However, this contradicts the second condition. See \Cref{fig:reachability-example}. \qed
\end{proof}

For every edge $(u,v) \in E$ and for every $i \in \{1,\dots,k\}$, we introduce a non-negative integer variable $x_{uvi}$ now representing the number of times the $s$-$t$ walk $W_i$ passes through $(u,v)$. Since walks can use edges arbitrarily many times, we impose $x_{uvi} \in \mathbb{Z}^+ \cup 0$ 
To obtain the formulation for trails, it will suffice to impose $x_{uvi} \in \{0,1\}$, because a trail can visit every edge at most once.

As in the trails case from the previous section, we impose the flow conservation condition of \cref{lem:walks-reachability} in its weaker form (as in \Cref{eq:flow_trails_1} and \Cref{eqn:flow-conservation-trail}): 
\begin{subequations}
\label{eqn:flow-conservation-walk}
\begin{align}
& && \sum_{(s,v) \in E} x_{svi} \leq 1 && \forall i \in \{1, \ldots, k\},\label{eq:flow_walks_1}\\
& && \sum_{(u,v) \in E} x_{uvi} - \sum_{(v,w) \in E} x_{vwi} = 0 &&  \forall i \in \{1, \ldots, k\}, \forall v \in V \setminus \{s, t\}.
\end{align}
\end{subequations}

The second condition of \Cref{lem:walks-reachability} can be encoded as follows. We say that an edge $(u,v) \in E$ is \emph{selected} by $W_i$ if $x_{uvi} \geq 1$, and that a node $v$ is \emph{selected} by $W_i$ if $v$ has a selected edge in-coming to it (i.e. $x_{uvi} \geq 1$, for some $(u,v) \in E$).

For every $v \in V$ and every $i \in \{1,\dots,k\}$, we introduce a non-negative integer variable $d_{vi}$. These variables generalize the corresponding variables from the $k$-FDPC problem. The idea is to impose that $d_{vi}=0$ if and only if node $v$ is \emph{not selected} in walk $W_i$. 
Thanks to the conservation-of-flow condition \eqref{eqn:flow-conservation-walk}, we can model this as:

\begin{subequations}
\label{reach1}
\begin{align}
& d_{si} = 1 &&  \forall i \in \{1,\dots,k\},\\
& \sum_{(u,v) \in E} x_{uvi} = 0 \rightarrow d_{vi} = 0 && \forall v \in V \setminus \{s\}, \forall i \in \{1,\dots,k\}.
\end{align}
\end{subequations}

Moreover, for every selected node $v$ we want to guarantee that $v$ is reachable from $s$ using selected edges. This holds if and only if there is some selected edge $(u,v)$ such that $u$ is reachable from $s$. We will thus impose that every selected node $v$ has a least one in-coming selected edge $(u,v)$ such that $d_{vi} \geq d_{ui} + 1$ (these will correspond to a spanning tree of the selected nodes rooted at the source). 
We encode such selected in-coming edges giving the strict increase with a binary variable $y_{uvi}$ for every edge. Variable $y_{uvi}$ equals 1 for exactly one selected in-coming edge to each selected node $v \in V \setminus \{s\}$. We thus state the following constraints (see \Cref{fig:reachability-example} for an example):

\begin{subequations}
\begin{align}
& y_{uvi} = 1 \rightarrow x_{uvi} \geq 1 && \forall (u,v) \in E, \forall i \in \{1,\dots,k\},\label{eq:reach-1}\\
& \left(\sum_{(u,v) \in E} x_{uvi} \geq 1\right) \rightarrow \left(\sum_{(u,v) \in E} y_{uvi} = 1\right) && \forall v \in V \setminus \{s\},\forall i \in \{1,\dots,k\},\label{eq:reach-2}\\
& \left(\sum_{(u,v) \in E} x_{uvi} \geq 1\right) \rightarrow \left(\sum_{(u,v) \in E} y_{uvi}(d_{vi} - d_{ui}) \geq 1\right) && \forall v \in V \setminus \{s\}, \forall i \in \{1,\dots,k\}.\label{eq:reach-3}
\end{align}
\label{reach2}
\end{subequations}

The following lemma states that \cref{eqn:flow-conservation-walk,reach1,reach2} correctly model $s$-$t$ walks.

\begin{figure}
    \centering
    \includegraphics[scale=0.20]{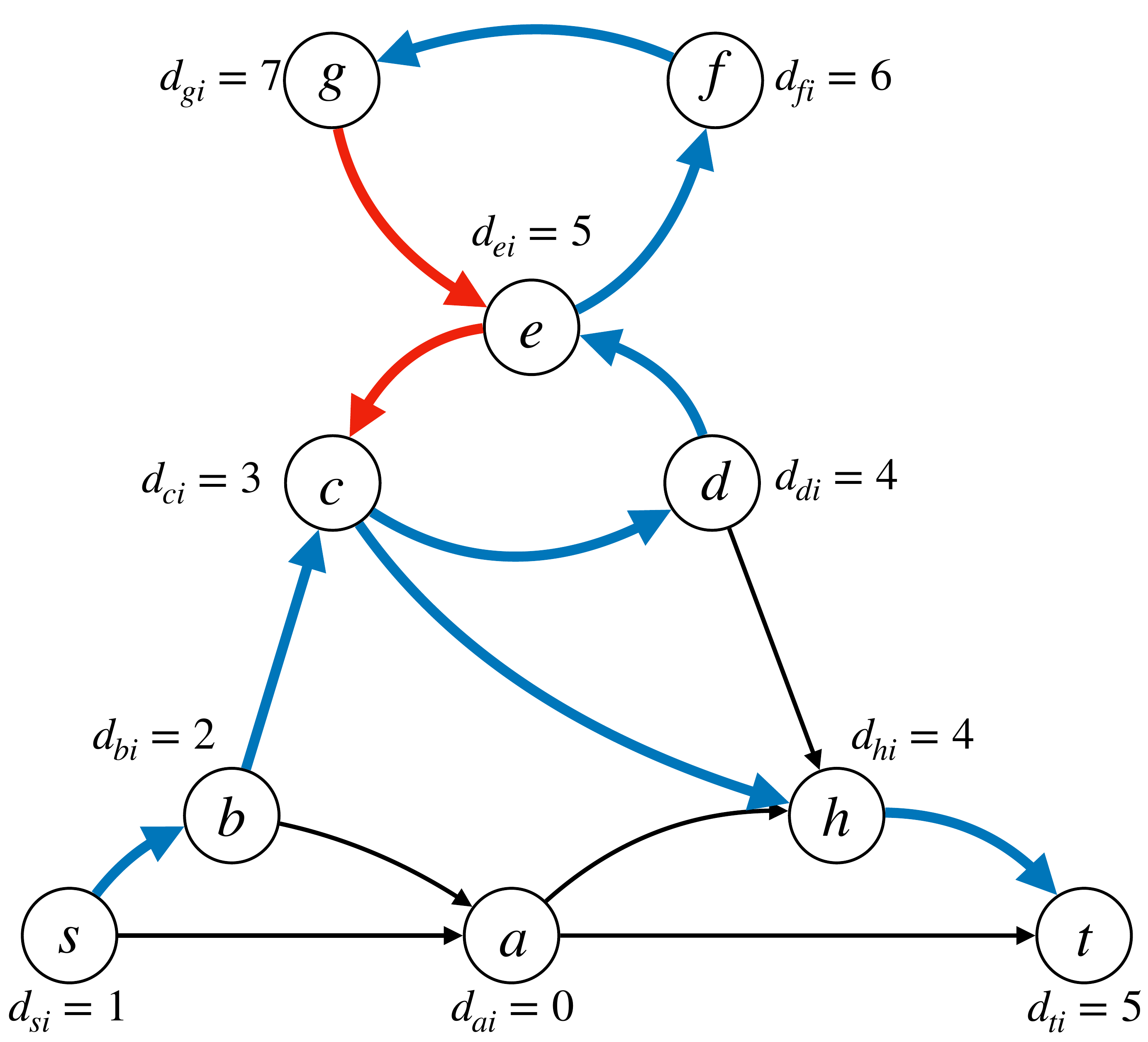}
    \caption{Illustration of \Cref{lem:walks-correctness}. Thick edges $(u,v)$ are those for which $x_{uvi} = 1$. An edge $(u,v)$ is drawn in blue if $y_{uvi} = 1$ (these form a spanning tree of the selected nodes rooted at $s$), and in red otherwise. Next to each node $v$, we draw the value of $d_{vi}$. Note that any blue edge is also thick (condition \eqref{eq:reach-1}), any selected node $v$ has exactly one incoming blue edge (condition \eqref{eq:reach-2}), and for any selected node $v$ its in-coming blue edge $(u,v)$ satisfies $d_{ui} < d_{vi}$ (condition \eqref{eq:reach-3}).
    \label{fig:reachability-example}
    }
\end{figure}

\begin{lemma}
\label{lem:walks-correctness}
The ILP made up of equations \eqref{eqn:flow-conservation-walk}, \eqref{reach1} and \eqref{reach2}, is feasible if and only if conditions 1 and 2 from Lemma \ref{lem:walks-reachability} hold.
\end{lemma}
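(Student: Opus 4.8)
The plan is to prove both directions by relating feasibility of the ILP to the two conditions of \Cref{lem:walks-reachability}, since that lemma already converts those conditions into the existence of an $s$-$t$ walk. The flow-conservation side is essentially immediate: constraints \eqref{eqn:flow-conservation-walk} are exactly the weak form of condition 1 of \Cref{lem:walks-reachability}, and one must only observe that because all $x_{uvi}$ are non-negative and $G$ has unique source $s$ and unique sink $t$, the inequality $\sum_{(s,v)\in E} x_{svi} \le 1$ together with the balance equations at every internal node forces the net excess at $s$ to equal $-1$ (or $0$, in which case the trivial empty walk works and all $d_{vi}=0$), and at $t$ to equal $+1$. So the real content is showing that, given a flow satisfying condition 1, the reachability constraints \eqref{reach1} and \eqref{reach2} are satisfiable if and only if every node incident to a selected edge is reachable from $s$ using selected edges.

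For the ``feasible $\Rightarrow$ condition 2'' direction, I would argue as follows. Suppose the ILP has a feasible assignment. Consider any node $v$ selected by $W_i$. By \eqref{eq:reach-2} there is a unique in-coming edge $(u,v)$ with $y_{uvi}=1$; by \eqref{eq:reach-1} this edge is selected, hence $u$ is also selected (it has an out-going selected edge, and by flow conservation \eqref{eqn:flow-conservation-walk} an in-coming one too, unless $u=s$); and by \eqref{eq:reach-3}, since the only non-zero term in that sum is the one for the chosen edge, $d_{vi} - d_{ui} \ge 1$, so $d_{ui} < d_{vi}$. Following these $y$-edges backwards from $v$ produces a sequence of selected nodes with strictly decreasing $d$-values; since there are finitely many nodes and $d_{si}=1$ is the minimum forced on any selected node (by \eqref{reach1}, any node with no in-coming selected edge would have to be $s$ or be unselected), this backward walk cannot cycle and must terminate at $s$. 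Reversing it gives an $s$-$v$ path using selected edges, which is condition 2.

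For the converse, ``condition 2 $\Rightarrow$ feasible'', I would explicitly construct the variables. Fix $i$; if no edge is selected, set all $d_{vi}=0$ except $d_{si}=1$ and all $y_{uvi}=0$ — all constraints hold vacuously, so assume some edge is selected. By condition 2, every selected node $v$ has a shortest path (in the selected subgraph) from $s$; let $\ell(v)$ be its length and set $d_{vi} := \ell(v)+1$ for selected $v$ (so $d_{si}=1$), and $d_{vi}:=0$ otherwise, satisfying \eqref{reach1}. For each selected $v \ne s$, pick one in-coming selected edge $(u,v)$ on a shortest path, so that $\ell(u) = \ell(v)-1$ hence $d_{ui} = d_{vi}-1$, and set $y_{uvi}=1$ for that edge and $0$ for all others; this is well-defined because a selected node on a shortest path of positive length has such a predecessor. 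Then \eqref{eq:reach-1} holds since the chosen edge is selected, \eqref{eq:reach-2} holds since exactly one in-coming $y$ is set for each selected $v$, and \eqref{eq:reach-3} holds since the unique non-zero term contributes $d_{vi}-d_{ui}=1$. Combined with condition 1 $=$ \eqref{eqn:flow-conservation-walk}, the assignment is feasible.

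I expect the main obstacle to be handling the implication-style constraints \eqref{eq:reach-2} and \eqref{eq:reach-3} cleanly: one must be careful that the hypothesis $\sum_{(u,v)\in E} x_{uvi} \ge 1$ is triggered for exactly the selected nodes $v\neq s$, that the product term $y_{uvi}(d_{vi}-d_{ui})$ in \eqref{eq:reach-3} collapses to a single summand because of \eqref{eq:reach-2}, and that in the forward direction the absence of a directed cycle among the $y$-edges really does follow from strict monotonicity of $d$ along those edges — i.e., that the $y$-edges form an in-forest rooted at $s$ rather than merely a functional graph that could contain a cycle disjoint from $s$. The strict inequality in \eqref{eq:reach-3} is exactly what rules that out, and spelling out this acyclicity argument is the crux; everything else is bookkeeping against \Cref{lem:walks-reachability}.
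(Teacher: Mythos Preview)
Your proposal is correct and follows essentially the same approach as the paper: in the forward direction, trace the $y$-edges backward using the strict $d$-decrease guaranteed by \eqref{eq:reach-3} to reach $s$, and in the reverse direction set $d_{vi}$ via shortest-path distances in the selected subgraph and pick $y$-edges along shortest-path predecessors. Your write-up is in fact more complete than the paper's own proof, which neither constructs the $y_{uvi}$ variables explicitly in the reverse direction nor spells out the acyclicity argument (strict monotonicity of $d$ along $y$-edges) that you correctly identify as the crux of the forward direction.
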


\begin{proof}
For the forward direction, for every selected $v$, the condition in Eq. \eqref{eq:reach-3} guarantees that we have a selected edge $(u,v)$, such that $d_{vi} \geq d_{ui} + 1$. Thus, following all such edges back until $s$ we obtain an $s$-$v$ path.

For the reverse direction, for every $v \in V$, set $d_{vi}$ to be the number of nodes of a shortest $s$-$v$ path using only edges $(u,v)$ for which $x_{uvi} \geq 1$, or 0 if there is no such path.\qed
\end{proof}

Next, we explain how to transform the conditional structures in constraints \eqref{reach1} and \eqref{reach2}. The second condition in \eqref{reach1} can be expressed as:

\begin{align}
& \sum_{(u,v) \in E} x_{uvi} \geq d_{vi} && \forall v \in V \setminus \{s\}, \forall i \in \{1,\dots,k\}.
\end{align}

Similarly, the first and third conditions in \eqref{reach2} can be expressed as
\begin{align}
&   x_{uvi} \geq y_{uvi} && \forall (u,v) \in E, \forall i \in \{1,\dots,k\},\label{eq:walks-lin-2}\\
&  \sum_{(u,v) \in E} x_{uvi} \leq \sum_{(u,v) \in E} y_{uvi}(d_{vi} - d_{ui}) && \forall  v \in V \setminus \{s\}, \forall i \in \{1,\dots,k\}.\label{eq:walks-lin-3}
\end{align}

For the second condition in \eqref{reach2} we can apply a similar, but slightly modified, approach using the fact that $\sum_{(u,v) \in E} x_{uvi} \leq |E|$:
\begin{subequations}
\begin{align}
&   \sum_{(u,v) \in E} x_{uvi} \leq |E| \sum_{(u,v) \in E} y_{uvi} && \forall v \in V \setminus \{s\}, \forall i \in \{1,\dots,k\},\label{eq:walks-lin-2a}\\
&   \sum_{(u,v) \in E} y_{uvi} \leq 1 && \forall v \in V \setminus \{s\}, \forall i \in \{1,\dots,k\}.\label{eq:walks-lin-2b}
\end{align}
\end{subequations}

\begin{remark}
{
In order to linearize \Cref{eq:walks-lin-3}, note first that $y_{uvi}$ is a binary variable. Next, we can introduce a new integer variable $d_{uvi} = d_{vi} - d_{ui}$. Note that $n$ is an upper bound for any $d_{vi}$ (since the interval $[0,n]$ for each $d_{vi}$ variable is sufficiently large to allow for a feasible solution, if there is any), and thus we have that each $d_{uvi} \in [-n,n]$. As such, the product $y_{uvi}d_{uvi}$ can be linearized as in \Cref{remark:linearize-trails-elim} (except that now we have a negative lower bound).}

\begin{subequations}
\begin{align}
& \phi_{uvi} \leq M y_{uvi} && \forall (u,v) \in E, \forall i \in \{1,\dots,k\},\label{eq:walk_rep_2}\\
& \phi_{uvi} \geq -M y_{uvi} && \forall (u,v) \in E, \forall i \in \{1,\dots,k\},\label{eq:walk_rep_5}\\
& \phi_{uvi} \leq (d_{vi} - d_{ui}) + (1-y_{uvi})M && \forall (u,v) \in E, \forall i \in \{1,\dots,k\},\\
& \phi_{uvi} \geq (d_{vi} - d_{ui}) - (1-y_{uvi})M && \forall (u,v) \in E, \forall  i \in \{1,\dots,k\}.\label{eq:walk_rep_4}
\end{align}
\label{int_lin}
\end{subequations}
\end{remark}

\begin{remark}
\label{remark:two_comp}
{For walks, the flow superposition constraint in \Cref{eqn:ilp_flow_eq} involves the product of two integer variables (since, for walks, $x_{uvi}$ is not assumed to be binary). Thus, we cannot simply linearize this product using \Cref{rem:lin}, since that requires one of the variables to be binary. However, we can reduce this more general case to several applications of \Cref{rem:lin}, using a \emph{power-of-two technique} (see e.g.~\citep{koren2018computer}). In this technique, one of the integer variables is replaced by its expression as a sum of powers of two, by introducing an auxiliary binary variable for each power of two (smaller than the maximum value it can achieve).
More specifically, suppose we want to linearize the product $xy$ between two integer variables $x$ and $y$. Suppose also that $x \in [\underline{x},\overline{x}]$, with $0 \leq \underline{x}$. For each $j \in \{0,\dots,\lfloor\log_2(\overline{x})\rfloor\}$, we introduce a binary variable $x_j$ and add the constraint
\[x = \sum_{j \in \{0,\dots,\lfloor\log_2(\overline{x})\rfloor\}}2^jx_j.\]
The product $xy$ then becomes
\[\sum_{j \in \{0,\dots,\lfloor\log_2(\overline{x})\rfloor\}}2^jx_jy.\]
Each product $x_jy$ is now between the binary variable $x_j$ and the integer variable $y$, which can now be linearized as in \Cref{rem:lin}.}
\end{remark}

The complete ILP formulation for this problem variant, which we give as Model~\ref{mod:pw} in \ref{sec:k-fdw}, has $O((|V|+|E|)k)$ variables and constraints.

\begin{remark}
In order to obtain a $k$-flow decomposition into \emph{exactly} $k$ $s$-$t$ walk, we can make the same change as in \Cref{rem:trails-exactly-k}, by imposing the basic flow conservation constraint:
\begin{equation}
\label{eqn:flow-conservation-wallk2}
    \sum_{(u,v) \in E} x_{uvi} - \sum_{(v,u) \in E} x_{vui} = 
    \begin{cases}
    0, & \text{if $v \in V \setminus \{s,t\}$}, \forall i \in \{1,\dots,k\}\\
    1, & \text{if $v = t$}, \forall i \in \{1,\dots,k\}\\
    -1, & \text{if $v = s$}, \forall i \in \{1,\dots,k\}.
    \end{cases}
\end{equation}

The value of the constant $M$ should be large enough to guarantee that the left side of the constraints in \eqref{int_lin} are satisfied adequately. Hence, in constraint \eqref{eq:walk_rep_2}, when $y_{uvi}=1$, the maximum value for $\phi_{uvi}$ should be equal to the maximum value for $d_{uvi}$ (i.e. $\overline{d}_{uvi}$). In constraint \eqref{eq:walk_rep_4}, when $y_{uvi}=1$, $\phi_{uvi}$ should be at most lower bounded by zero and that can be achieve by using $M = \overline{d}_{uvi}$.

For walks, where $y_{uvi}$ and $x_{uvi}$ behaves as integer variables, this product can be linearized as in \Cref{remark:two_comp}. 
\end{remark}

\section{Experiments}
\label{exp}

\subsection{Experiment Design}
\label{sec:design}

\subsubsection{Solvers} We designed our experiments to test the minimization versions of the three problem variants from \Cref{def:problem-variants}. For \PC we implemented the model from Section \eqref{pc} (in full as Model~\ref{mod:pc} in \ref{sec:k-fdpc}). For \PT, we implemented both the iterative constraint generation approach described in \Cref{pt} (in full as Model~\ref{mod:pt} in \ref{sec:k-fdt}) and the model from \Cref{pw} (in full as Model~\ref{mod:pw} in \ref{sec:k-fdw}) with binary $x_{uvi}$ variables. For \PW we implemented the model from \Cref{pw} (in full as Model~\ref{mod:pw} in \ref{sec:k-fdw}) with $x_{uvi}$ as integer variables. To find the minimum $k$, we implemented the iterative search over all values of $k$ in increasing order, described at the end of \Cref{sec:network-flows-and-decompositions}. All four were implemented using the Python API for CPLEX 20.1 under default settings. We ran our experiments on a personal computer with 16 GB of RAM and an Apple M1 processor at 2.9 GHz.

\subsubsection{Datasets} We test the performance of the solvers under a range of biological and transportation graph topologies and flow weights, which we also make available at \url{github.com/algbio/MFD-ILP}. As our first dataset, we took one of the larger datasets produced by \cite{shao2017accurate} (\path{rnaseq/sparse_quant_SRR020730.graph}) and used in a number of flow decomposition benchmarking studies~\citep{kloster2018practical,williams2021flow}. This contains transcriptomic data as a flow in a DAG for each gene of the human genome. A slight alteration was applied to this dataset in order to create cycles in each instance. We call this dataset \textbf{SRR020730-Salmon-Adapted}.

For the second dataset, we used a collection of different instances from flow network transportation data curated by \cite{bar2021transportation}. We refer to this as \textbf{Transportation Data}. The dataset comprises network flows from different cities worldwide collected through various scenarios.

To obtain a more complex dataset, we created a dataset consisting of genome graphs from up to 50 variants of \emph{E. coli} genomes, with flow values coming from the abundances of these genomes. The complete dataset is composed of more than a thousand instances with a different number of genomes. For the sake of this experiment, we sampled up to 10 graphs from each number of genomes, starting from 2. The graphs in this dataset are the largest.

In our opinion, these three datasets cover a range of real-world applications, indicative of our models' efficiency and scalability in such scenarios. Since the precise details of these datasets are outside this paper's scope, we provide a more detailed description of how the datasets were created at \url{github.com/algbio/MFD-ILP}.

\subsection{Results and Discussion}
\label{sec:results}

\Cref{fig:plots} summarizes the performance of the models in terms of runtime, by showing the proportion of all instances that can be (individually) solved within a certain time. Overall, we observe that all four formulations are efficient, finishing in under {45} seconds on the first dataset, {75} seconds on the second dataset, and {1.5} minutes on the third dataset. Note that the datasets are of increasing complexity in terms of number of nodes and edges. Moreover, the runtime of \PC is relatively smaller than the runtime for \PT and \PW. This is expected given that the number of constraints and variables is smaller when handling paths and cycles.

\begin{figure}	
	\centering	
		\subfloat[\textbf{SRR020730 Salmon Adapted}]{\includegraphics[width=0.5\textwidth]{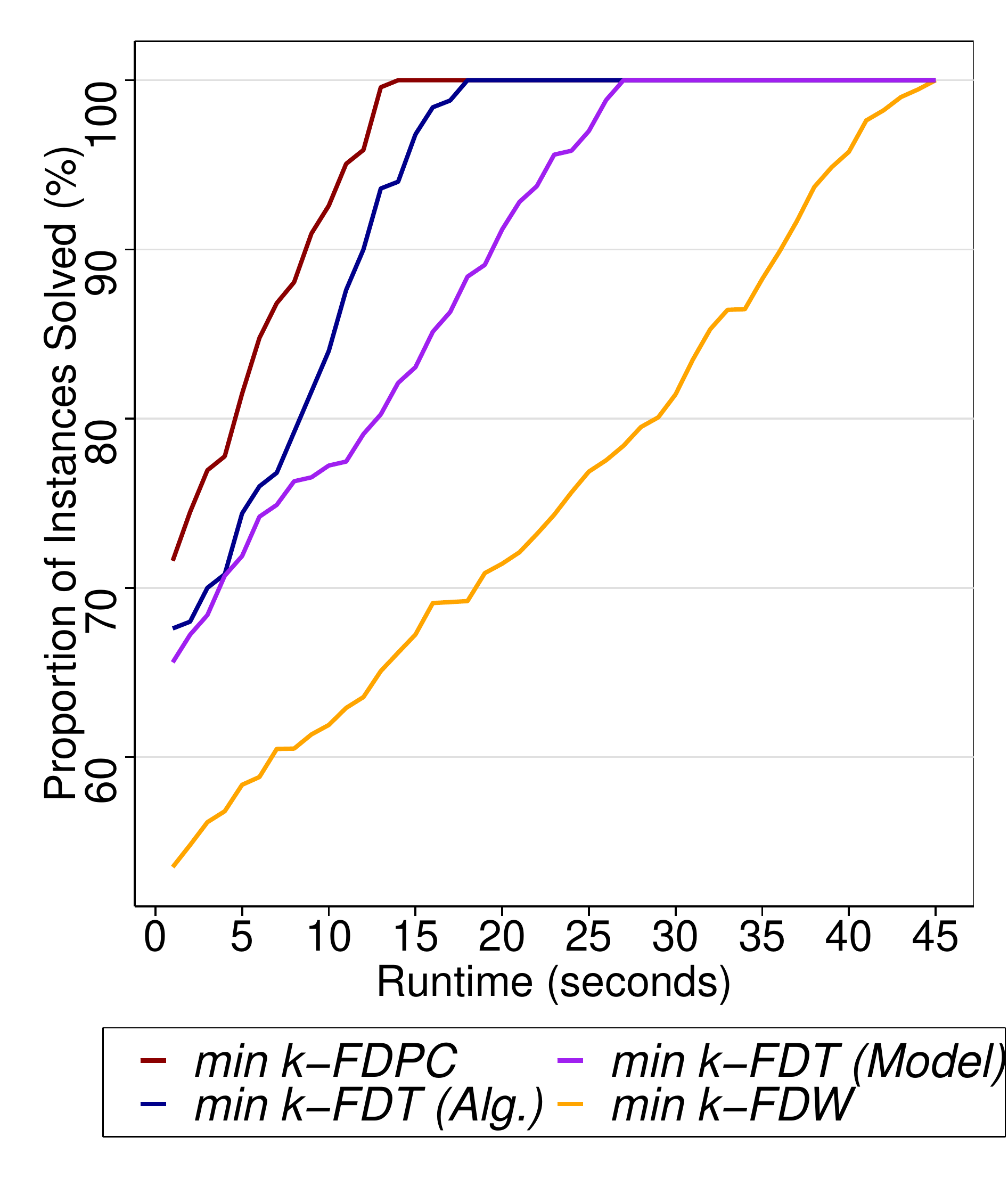}\label{fig:salmon}}
		\subfloat[\textbf{Transportation Data}]{\includegraphics[width=0.5\textwidth]{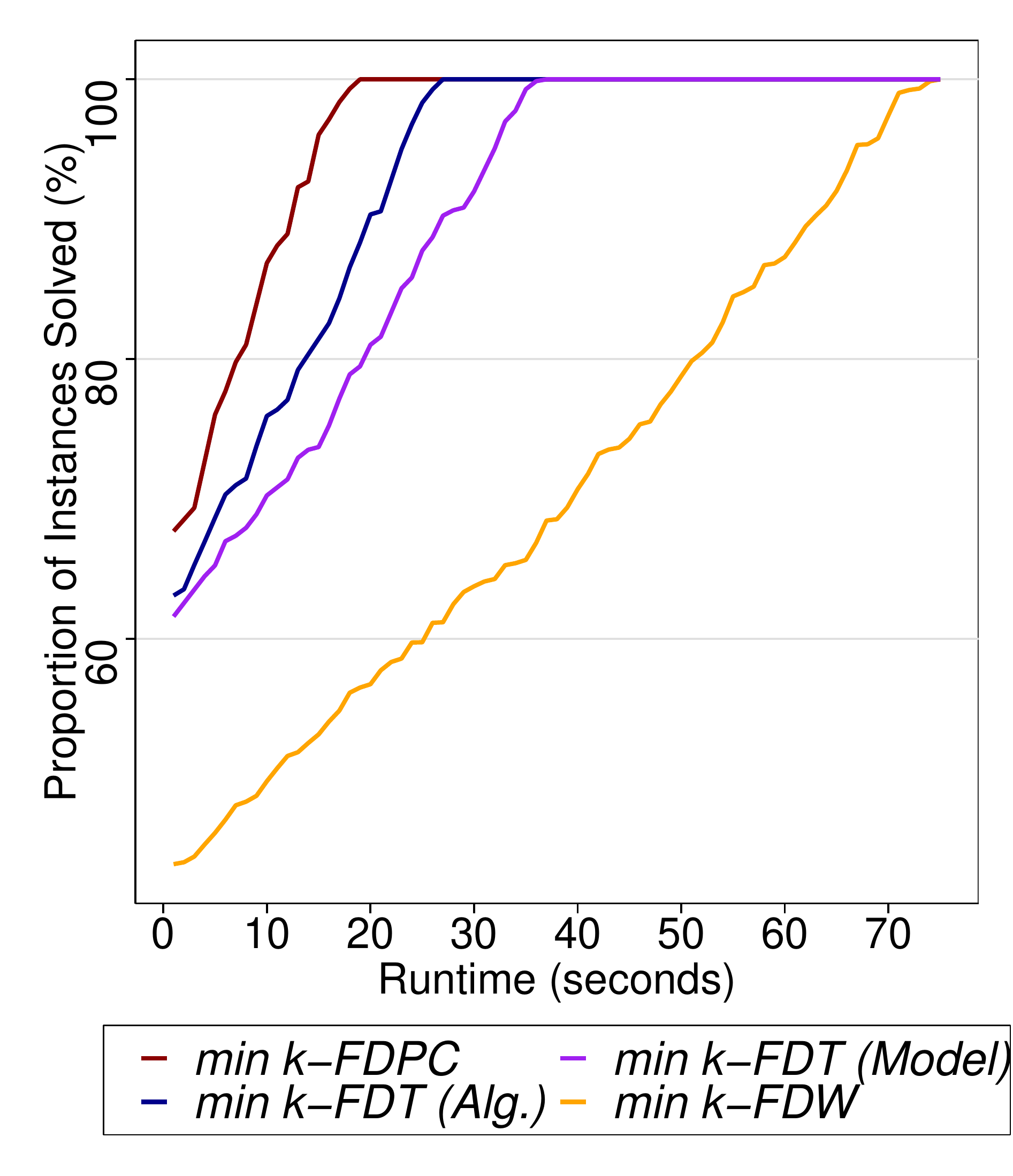} \label{fig:transport}}\\
		\subfloat[\textbf{E. coli Strains}]{\includegraphics[width=1\textwidth]
		{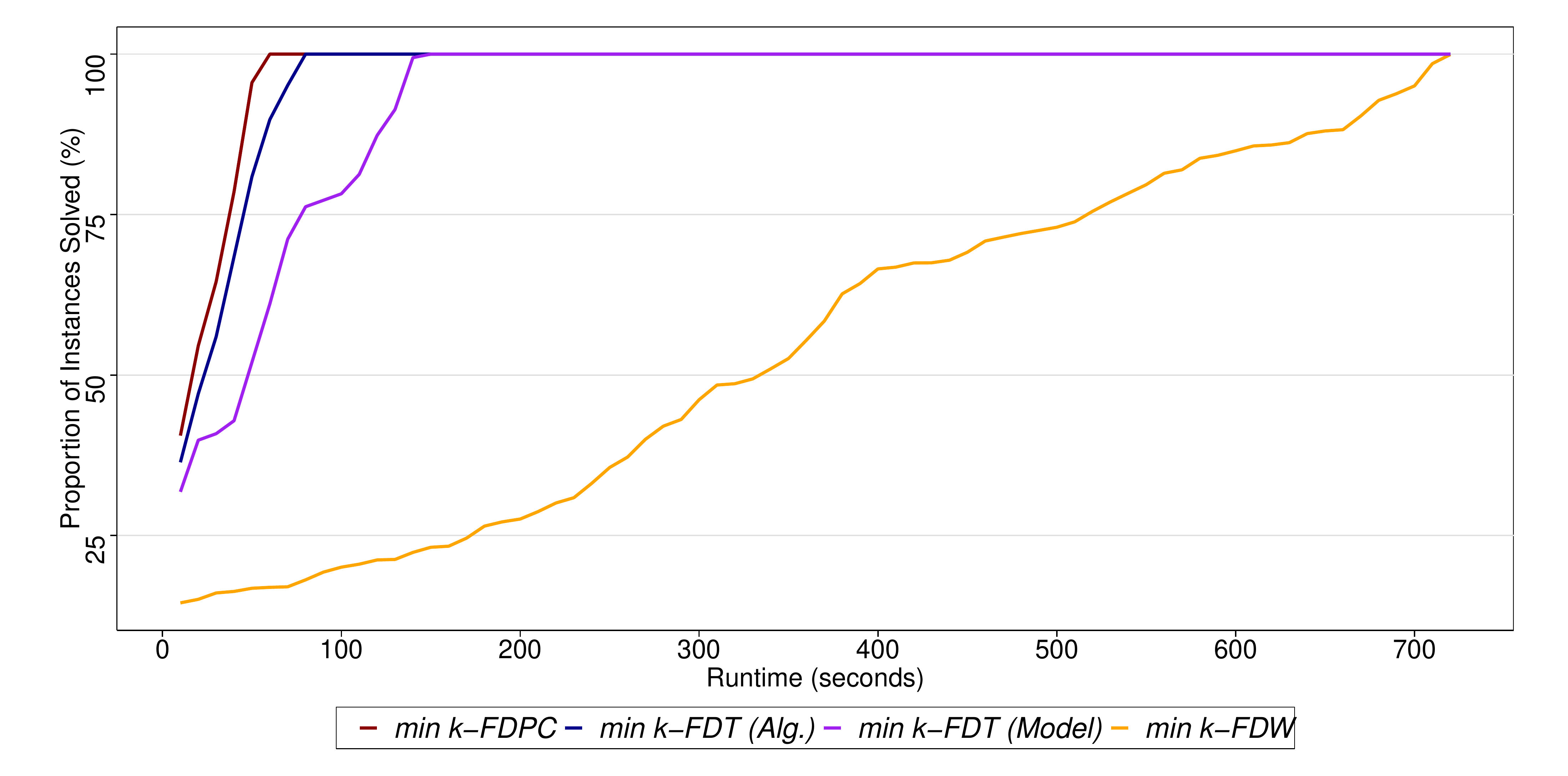} \label{fig:bacteria}}\\
	\caption{The percentage of all instances (y-axis) individually solvable within a certain number of seconds (x-axis).}
	\label{fig:plots}
\end{figure} 

We also evaluate how the models scale in terms of complexity of the instances of each dataset. We focus on the size $k$ of the flow decomposition since the models have size linear in $k$. In \Cref{tab:variants_runtime} we group the graphs into ranges based on the minimum size of a flow decomposition into paths or cycles since this is the most basic problem variant (column $min~k$). Since $|V|$ and $|E|$ also increase as $min~k$ increases, this table also indicates how the models scale in terms of graph sizes. The runtimes of all models do scale up adequately, reiterating that they are a viable candidate for real applications. The reachability model for trails (where its main decision variables $x_{uvi}$ are binary) scales better than for walks (where these variables are integer).

\begin{table}
\centering
\caption{Averages and standard deviations (in parentheses) for graph size, running times of the four variants of problem, and the number of iterations of \Cref{algo:trail} until a feasible solution is found. For \Cref{algo:trail} the runtime is the total over all iterations. If an instance did not admit a decomposition into trails, we do not include it in the runtime average for $\emph{min}$ $k$-FDT.}
\label{tab:variants_runtime}
\resizebox{1.0\columnwidth}{!}{%
\begin{tabular}{l lcll  l  lll  l}
\toprule
&&&&& \multicolumn{1}{c}{\multirow{2}{*}{\parbox{2.2cm}{\centering\smallskip $\emph{min}$ $k$-FDPC\\ (sec) \smallskip}}} &\multicolumn{3}{c}{$\emph{min}$ $k$-FDT (sec)} & \multicolumn{1}{c}{\multirow{2}{*}{\parbox{2.2cm}{\centering\smallskip $\emph{min}$ $k$-FDW\\ (sec) \smallskip}}}\\
\cmidrule(r){7-9}  
 & \emph{min} $k$ & $\#$instances & $|V|$ & $|E|$ & & \cref{algo:trail} & $\#$iter & Model~\ref{mod:pw} & \\
 \midrule
\multirow{7}{*}{\rotatebox{90}{\shortstack{\textbf{SRR020730}\\\textbf{Salmon}\\\textbf{Adapted}}}}
\\
& 4-10	        & 5131  &	23.42 (4.21)	&	45.12  (8.22) & 3.22  (0.36) &  5.13  (0.52) & 2.32 & {10.2 (1.41)}  & {13.42 (1.12)} \\
& 11-15	        & 595    &	32.34 (3.42)	&	60.21  (5.41) & 4.43  (0.93) &  9.13 (1.23) & 2.94 & 14.5 (1.54)  & {19.88 (3.41)}\\
& 16-20	        & 236	& 	43.81 (4.67)	&	71.14  (9.57) & 8.68  (1.72) &  14.03 (2.12) & 3.20 & {17.2 (2.72)}  & {28.61 (5.74)} \\
& 21-\emph{max}	& 60    &   51.42 (5.23)	&	102.14 (11.31) & 12.42 (2.41) & 18.12 (3.34) & 4.51 & {27.2 (4.41)}  & {39.79 (5.81)} \\
\\
\midrule
\multirow{6}{*}{\rotatebox{90}{\shortstack{\textbf{Transportation}\\\textbf{Data}}}}
\\
&4-10	        & 200   &	43.22  (6.21)	&	78.21 (10.13) &  4.51  (0.56) & 8.32 (1.13)  & 2.41 & {16.54 (2.12)} & {28.22 (3.31)} \\
&11-15	        & 150   &	50.74  (7.51)	&	83.21  (7.65) &  7.13  (1.32) & 15.21 (2.41)  & 3.52 & {21.33 (3.23)} & {53.15 (6.13)} \\
&16-20	        & 50	& 	64.33  (4.78)	&	93.21  (5.11) & 10.42  (2.13) & 17.89 (3.29)  & 5.93 & {28.92 (4.09)} & {78.61 (7.21)} \\
&21-\emph{max}	& 20    &   80.32 (5.21)	&	130.1  (3.13) & 14.54  (1.21) & 24.72 (2.12)  & 6.32 & {36.94 (3.21)} & {112.71 (6.88)} \\
\\
\midrule
\multirow{6}{*}{\rotatebox{90}{\shortstack{\textbf{E. coli}\\\textbf{Strains}}}}
&4-10	        & 80    &	48.21    (5.41)	&	82.13  (7.24)  & 8.51   (2.41) & 14.13  (3.13)  &  4.33  &  {22.41 (5.31)}  &  {56.12  (18.25)} \\
&11-15	        & 50    &	63.73  (11.42)	&	90.22  (12.31) & 9.24   (3.13) & 17.31  (4.41)  &  5.92  &  {35.38 (10.24)}  & {147.21  (27.13)} \\
&16-20	        & 50	& 	78.34  (22.14)	&	108.01 (15.44) & 17.23  (4.62) & 25.89  (7.13)  &  6.24  &  {53.33 (14.41)}  & {297.35  (42.34)} \\
&21-25      	& 50    &   91.41  (28.13)	&	140.13 (17.62) & 24.46  (8.42) & 30.43  (9.31)  &  7.51  &  {61.42 (22.14)} & {412.32  (58.42)} \\
&26-30          & 50    &   100.81 (33.32)  &   156.32 (24.54) & 36.93 (10.13) & 45.13 (13.41)  &  8.43  &  {78.43 (29.43)} & {523.52  (58.21)}\\
&31-\emph{max}         & 75    &   120.42 (30.34)  &   176.42 (30.32) & 60.61 (12.42) & 79.41 (18.31)  & 10.21  & {144.31 (34.62)} & {657.64  (63.55)}\\
\bottomrule
\end{tabular}
}
\end{table}

Notice that the scalability in terms of $min~k$ was an issue for MFD algorithms on DAGs before the ILP of \cite{dias2022fast} because the previous exact FPT algorithm of \cite{kloster2018practical} had an exponential dependency on $min~k$. Our results imply that also for cyclic graphs, ILP-based solutions have no issues in scaling to larger values of $min~k$. Thus, even if an FPT algorithm for MFD in graphs with cycles (parameterized by the minimum size of a solution) is open, our results suggest that such an FPT algorithm might not be necessary for practice.

To compare the three problem variants, in \Cref{tab:variantes_numbers} we study each decomposition type's minimum size. Since a decomposition with trails does not always exist, we also show the percentage of instances admitting such a decomposition. While all instances for all three datasets always admit a decomposition into trails for $min~k \leq 15$, for larger $min~k$ values, up to 10\%, 5\%, and 24\% of the instances in the first, second and third datasets, respectively, do not admit such a decomposition. On average, we observe that the minimum solution size for \PC is more significant than for \PT, which is more prominent for \PW. This is explained by the fact that a trail may use a node multiple times, and a walk may also use an edge numerous times. As such, a trail or a walk may combine a source-to-sink path with one or more cycles if their weights can be appropriately adjusted to form a flow decomposition. All these insights show that there is a clear difference between the problem (as also discussed in \Cref{into}), and that all the different variants may be explored in applications.

\begin{table}
\centering
 \caption{Average minimum solution sizes for the flow decomposition problem into path or cycles (also split into the average number of paths and cycles), into trails (and the percentage of instances admitting a flow decomposition into trails), and into walks.
 }.
\label{tab:variantes_numbers}
\resizebox{0.8\columnwidth}{!}{%
\begin{tabular}{r r rrc rr r}
&& \multicolumn{3}{c}{$\emph{min}$ $k$-FDPC } &\multicolumn{2}{c}{$\emph{min}$ $k$-FDT } & \multicolumn{1}{c}{$\emph{min}$ $k$-FDW}\\
\cmidrule(r){3-5}   \cmidrule(r){6-7}   \cmidrule(r){8-8}  
& \emph{min} $k$ & $\#$paths & $\# $cycles &$\#$path or cycles & $\#$trails &$\%$feasible & $\#$walks \\
\midrule
\multirow{7}{*}{\rotatebox{90}{\shortstack{\textbf{SRR020730}\\\textbf{Salmon}\\\textbf{Adapted}}}}
\\
&4-10	        & 5.1   & 4.4  & 9.5 & 3.5  & 100  & 2.8 \\
&11-15	        & 7.1   & 6.2  & 13.3 & 5.1  & 100  & 4.3 \\
&16-20	        & 9.7   & 8.1  & 17.7 & {5.8}  &  93  & {5.4} \\
&21-\emph{max}	& 13.1  & 10.3 & 23.3 & {10.1} &  90  & {9.8} \\
\\
\midrule
\multirow{6}{*}{\rotatebox{90}{\shortstack{\textbf{Transportation}\\\textbf{Data}}}}
\\
&4-10	        & 5.4   & 3.2 &  8.6  &  {4.4} & 100 & {3.4} \\
&11-15	        & 6.1   & 4.4 &  10.5  &  {5.9} & 100 & {4.8} \\
&16-20	        & 8.2   & 7.3 & 15.5  &  {7.3} &  98 & {5.4} \\
&21-\emph{max}	& 10.1  & 8.3 & 18.4  &  {8.8} &  95 & {4.8} \\
\\
\midrule
\multirow{6}{*}{\rotatebox{90}{\shortstack{\textbf{E. coli}\\\textbf{Strains}}}}
& 4-10          & 5.3  & 4.9  & 10.2  &  {4.5} & 100 &  {3.1}\\
& 11-15         & 8.9  & 4.4  & 13.2 &  {8.2} & 100 &  {6.8} \\
& 16-20         & 10.5 & 6.3  & 16.8 &  {9.1} & 100 &  {7.8} \\
& 21-25         & 15.2 & 8.5  & 23.7 & {12.6} & 90  &  {9.9} \\
& 26-30         & 19.3 & 7.1  & 26.4 & {15.1} & 84  &  {12.4} \\
& 31-\emph{max}	& 23.1 & 11.4 & 34.5 & {20.1} & 76  &  {19.6} \\
\bottomrule
\end{tabular}
}
\end{table}

Finally, we analyze the two solutions for \PT, constraint generation (Model~\ref{mod:pt}) and Model~\ref{mod:pw} with variables $x_{uvi}$ binary. While both are efficient, the former is generally faster than the latter because it has a smaller number of constraints, and on our data, it also requires a small number of iterations (at most 11). However, there is no guarantee that it will be concluded after a small number of iterations in general (but as mentioned in \Cref{pt}, it does find a feasible decomposition after a finite number of iterations). Another advantage provided by the constraint generation formulation is that algorithmic verification procedures can assist it (in the same way as we delegated the strong connectivity check to an algorithm) and use this to craft more straightforward constraints in the original model (in our case, simple constraints based on out-going edges). 

\section{Conclusions and Future Work}
\label{con}

Flow decomposition is a common problem present in different fields of science, including Computer Science, Bioinformatics and Transportation. 
Throughout the literature, there has been considerable effort into solving this problem, especially on acyclic inputs. This is mainly because acyclicity guarantees strong properties that can be used to develop algorithms (fixed-parameter tractable, approximation). However, for graphs that contain cycles, there has yet to be an exact solution proposed so far, the few current approaches being heuristics based on greedy algorithms.

This paper considers three natural variants of the flow decomposition problem in graphs with cycles: decompositions into paths or cycles, trails, and walks, respectively. Our ILP formulations generally adopt the same strategy as in the acyclic case from~\cite{dias2022fast}, namely formulating what constitutes an element of decomposition and then requiring that these weighted elements fit the input flow values. However, the novelty of our formulations resides in modelling the different types of walks in a cyclic graph, which is notably more involved than formulating a path in a DAG. These formulations can also be of independent interest outside the flow decomposition problem since they model basic graph-theoretic notions.

Our formulations are also extensively tested on biological and transportation datasets. Despite the problems being NP-hard, they solve any instance of the three datasets under {45 seconds, 75 seconds, and 12 minutes}, respectively. Nevertheless, since these are the first exact solutions to the problem on cyclic graphs, the quest for more efficient solutions remains open.

As future work, it would be interesting to extend the models presented here to include other aspects of empirical data, such as all the flow decomposition problem variants discussed in~\cite{dias2022fast} for Bioinformatics applications. More specifically, in one problem variant, we are also given a set of paths (called \emph{subpath constraints}) that must appear in at least one walk of the flow decomposition. The solution from~\cite{dias2022fast} for this variant can be easily adapted to the \PC problem (because edges are not repeated), but it remains open how to adapt it to trails or walks because they can repeat edges. In the \emph{inexact} and \emph{imperfect} flow decomposition variants from~\cite{dias2022fast}, the weighted walks do not need to fit the input flow perfectly. However, the constraints modelling such solutions from~\cite{dias2022fast} immediately carry over to all our problem variants. Lastly, our reachability formulation can also handle problem variants in which the solution walks can pass through an edge at most a given number of times by just setting a bound to each $x_{uvi}$ variable.

\section*{Acknowledgments}
This work was partially funded by the European Research Council (ERC) under the European Union's Horizon 2020 research and innovation programme (grant agreement No.~851093, SAFEBIO), partly by the Academy of Finland (grants No.~322595, 328877, 346968), and partially by the US NSF (award 1759522).

\bibliographystyle{apalike}
\bibliography{Past/main}

\appendix
\newpage
\section{Full ILP formulation for $k$-Flow Decomposition into Paths or Cycles ($k$-FDPC)}
\label{sec:k-fdpc}

\begin{subequations}
\label{mod:pc}
\begin{align}
& \sum_{(u,v) \in E} x_{uvi} - \sum_{(v,w) \in E} x_{vwi} = 0 &&  \forall i \in \{1, \ldots, k\}, \forall v \in V \setminus \{s, t\},\\
& f_{uv} = \sum_{i \in \{1,\dots,k\}} \pi_{uvi} && \forall (u,v) \in E, \\
& \pi_{uvi} \leq f_{uv} x_{uvi} && \forall (u,v) \in E, \forall i \in \{1,\dots,k\},\\
& \pi_{uvi} \leq w_i && \forall (u,v) \in E, \forall i \in \{1,\dots,k\},\\
& \pi_{uvi} \geq w_i - (1-x_{uvi})f_{uv} && \forall (u,v) \in E, \forall i \in \{1,\dots,k\},\\
& d_{vi} \geq d_{ui} + 1 + (n-1)(x_{uvi} - 1) && \forall (u,v) \in E, \forall i \in \{1,\dots,k\},\\
& \sum_{v\in V} x_{svi} + \sum_{v\in V} c_{vi} \leq 1 && \forall i \in \{1,\dots,k\}, \\
& w_i \in \mathbb{Z}^+ && \forall i \in \{1,\dots,k\},\\
& x_{uvi} \in \{0,1\} && \forall (u,v) \in E, \forall i \in \{1,\dots,k\},\\
& \pi_{uvi} \in \mathbb{Z}^+ \cup \{0\} && \forall (u,v) \in E, \forall i \in \{1,\dots,k\},\\
& c_{vi} \in \{0,1\} && \forall v \in V, \forall i \in \{1,\dots,k\},\\
& d_{vi} \in \mathbb{Z}^+ && \forall v \in V, \forall i \in \{1,\dots,k\}.
\end{align}
\end{subequations}

\newpage
\section{Full ILP formulation for $k$-Flow Decomposition into Trails via Constraint Generation ($k$-FDT($\mathcal{C}$))}
\label{sec:k-fdt}

\begin{subequations}
\label{mod:pt}
\begin{align}
 & \sum_{(s,v) \in E} x_{svi} \leq 1 && \forall i \in \{1, \ldots, k\},\\
 & \sum_{(u,v) \in E} x_{uvi} - \sum_{(v,w) \in E} x_{vwi} = 0 &&  \forall i \in \{1, \ldots, k\}, \forall v \in V \setminus \{s, t\} \\
 & f_{uv} = \sum_{i \in \{1,\dots,k\}} \pi_{uvi} && \forall (u,v) \in E, \\
 & \pi_{uvi} \leq f_{uv} x_{uvi} && \forall (u,v) \in E, \forall i \in \{1,\dots,k\},\\
 & \pi_{uvi} \leq w_i && \forall (u,v) \in E, \forall i \in \{1,\dots,k\},\\
 & \pi_{uvi} \geq w_i - (1-x_{uvi})f_{uv} && \forall (u,v) \in E, \forall i \in \{1,\dots,k\},\\
 & \sum_{(u,v) \in E(C)} x_{uvi} \geq |C| - M(1-\beta_{Ci})  && \forall C \in \mathcal{C}, \forall i \in \{1,\dots,k\},\\
 & \sum_{(u,v) \in E(C)} x_{uvi} - |C| + 1 - M\beta_{Ci} \leq 0 && \forall C \in \mathcal{C}, \forall i \in \{1,\dots,k\},\\ 
 & \sum_{(u,v) \in \delta^+(C) \setminus E(C)} x_{uvi} \geq \beta_{Ci} && \forall C \in \mathcal{C}, \forall i \in \{1,\dots,k\} \\ 
 & w_i \in \mathbb{Z}^+ && \forall i \in \{1,\dots,k\},\\
 & x_{uvi} \in \{0,1\} && \forall (u,v) \in E, \forall i \in \{1,\dots,k\},\\
 & \pi_{uvi} \in \mathbb{Z}^+ \cup \{0\} && \forall (u,v) \in E, \forall i \in \{1,\dots,k\},\\
 & \beta_{Ci} \in \{0,1\}  && \forall C \in \mathcal{C}, \forall i \in \{1,\dots,k\}.
 \end{align}
 \end{subequations}

\section{Full ILP formulation for $k$-Flow Decomposition into Walks ($k$-FDW): $x_{uvi}$ as binary variables}
\label{sec:k-fdw}

\begin{subequations}
\label{mod:pw}
\begin{align}
 & \sum_{(s,v) \in E} x_{svi} \leq 1 && \forall i \in \{1, \ldots, k\}, \\
 & \sum_{(u,v) \in E} x_{uvi} - \sum_{(v,w) \in E} x_{vwi} = 0 &&  \forall i \in \{1, \ldots, k\}, \forall v \in V \setminus \{s, t\},\\
 & f_{uv} = \sum_{i \in \{1,\dots,k\}} \pi_{uvi} && \forall (u,v) \in E, \\
 & \pi_{uvi} \leq f_{uv} x_{uvi} && \forall (u,v) \in E, \forall i \in \{1,\dots,k\},\\
 & \pi_{uvi} \leq w_i && \forall (u,v) \in E, \forall i \in \{1,\dots,k\},\\
 & \pi_{uvi} \geq w_i - (1-x_{uvi})f_{uv} && \forall (u,v) \in E, \forall i \in \{1,\dots,k\},\\
 & \sum_{(u,v) \in E} x_{uvi} \geq d_{vi} && \forall v \in V \setminus \{s\}, \forall i \in \{1,\dots,k\},\\
 & x_{uvi} \geq y_{uvi} && \forall (u,v) \in E, \forall i \in \{1,\dots,k\},\\
 & \sum_{(u,v) \in E} x_{uvi} \leq \sum_{(u,v) \in E} \phi_{uvi} && \forall  v \in V \setminus \{s\}, \forall i \in \{1,\dots,k\},\\
 & \sum_{(u,v) \in E} x_{uvi} \leq |E| \sum_{(u,v) \in E} y_{uvi} && \forall v \in V \setminus \{s\}, \forall i \in \{1,\dots,k\},\\
 & \sum_{(u,v) \in E} y_{uvi} \leq 1 && \forall v \in V \setminus \{s\}, \forall i \in \{1,\dots,k\}, \\
 & d_{si} = 1 &&  \forall i \in \{1,\dots,k\},\\
 & \phi_{uvi} \leq n y_{uvi} && \forall (u,v) \in E, \forall i \in \{1,\dots,k\},\\
 & \phi_{uvi} \geq -n y_{uvi} && \forall (u,v) \in E, \forall i \in \{1,\dots,k\},\\
 & \phi_{uvi} \leq (d_{vi} - d_{ui}) + (1-y_{uvi})n && \forall (u,v) \in E, \forall i \in \{1,\dots,k\},\\
 & \phi_{uvi} \geq (d_{vi} - d_{ui}) - (1-y_{uvi})n && \forall (u,v) \in E, \forall  i \in \{1,\dots,k\},\\
 & w_i \in \mathbb{Z}^+ && \forall i \in \{1,\dots,k\},\\
 & x_{uvi} \in \{0,1\} && \forall (u,v) \in E, \forall i \in \{1,\dots,k\},\\
 & \pi_{uvi} \in \mathbb{Z}^+ \cup \{0\} && \forall (u,v) \in E, \forall i \in \{1,\dots,k\\
 & d_{vi} \in \mathbb{Z}^+ && \forall v \in V, \forall i \in \{1,\dots,k\},\\
 & y_{uvi} \in \{0,1\} && \forall (u,v) \in E, \forall i \in \{1,\dots,k\},\\
 & \phi_{uvi} \in \mathbb{Z} && \forall (u,v) \in E, \forall i \in \{1,\dots,k\}.
\end{align}
\end{subequations}

\newpage
\section{Full ILP formulation for $k$-Flow Decomposition into Walks ($k$-FDW): $x_{uvi}$ as an integer variables}
\label{sec:k-fdw_II}

\begin{subequations}
\label{mod:pw_II}
\begin{align}
 & \sum_{(s,v) \in E} x_{svi} \leq 1 && \forall i \in \{1, \ldots, k\}, \\
 & \sum_{(u,v) \in E} x_{uvi} - \sum_{(v,w) \in E} x_{vwi} = 0 &&  \forall i \in \{1, \ldots, k\}, \forall v \in V \setminus \{s, t\},\\
 & w_i = \sum_{j \in \{0,\dots,\lfloor\log_2(\overline{w})\rfloor\}} 2^j \zeta_{ij} && \forall i \in \{1, \ldots, k\},\\
 & f_{uv} = \sum_{i \in \{1,\dots,k\}} \sum_{j \in \{0,\dots,\lfloor\log_2(\overline{w})\rfloor\}} 2^j\varphi_{uvij} && \forall (u,v) \in E, \\
 & \varphi_{uvij} \leq f_{uv} \zeta_{ij} && \forall (u,v) \in E, \forall i \in \{1,\dots,k\}, \forall j \in \{0,\dots,\lfloor\log_2(\overline{w})\rfloor\}, \\
 & \varphi_{uvij} \leq x_{uvi} && \forall (u,v) \in E, \forall i \in \{1,\dots,k\}, \forall j \in \{0,\dots,\lfloor\log_2(\overline{w})\rfloor\}, \\
 & \varphi_{uvij} \geq x_{uvi} - (1-\zeta_{ij})f_{uv} && \forall (u,v) \in E, \forall i \in \{1,\dots,k\}, \forall j \in \{0,\dots,\lfloor\log_2(\overline{w})\rfloor\},\\
 & \sum_{(u,v) \in E} x_{uvi} \geq d_{vi} && \forall v \in V \setminus \{s\}, \forall i \in \{1,\dots,k\},\\
 & x_{uvi} \geq y_{uvi} && \forall (u,v) \in E, \forall i \in \{1,\dots,k\},\\
 & \sum_{(u,v) \in E} x_{uvi} \leq \sum_{(u,v) \in E} \phi_{uvi} && \forall  v \in V \setminus \{s\}, \forall i \in \{1,\dots,k\},\\
 & \sum_{(u,v) \in E} x_{uvi} \leq |E| \sum_{(u,v) \in E} y_{uvi} && \forall v \in V \setminus \{s\}, \forall i \in \{1,\dots,k\},\\
 & \sum_{(u,v) \in E} y_{uvi} \leq 1 && \forall v \in V \setminus \{s\}, \forall i \in \{1,\dots,k\}, \\
 & d_{si} = 1 &&  \forall i \in \{1,\dots,k\},\\
 & \phi_{uvi} \leq n y_{uvi} && \forall (u,v) \in E, \forall i \in \{1,\dots,k\},\\
 & \phi_{uvi} \geq -n y_{uvi} && \forall (u,v) \in E, \forall i \in \{1,\dots,k\},\\
 & \phi_{uvi} \leq (d_{vi} - d_{ui}) + (1-y_{uvi})n && \forall (u,v) \in E, \forall i \in \{1,\dots,k\},\\
 & \phi_{uvi} \geq (d_{vi} - d_{ui}) - (1-y_{uvi})n && \forall (u,v) \in E, \forall  i \in \{1,\dots,k\},\\
 & w_i \in \mathbb{Z}^+ && \forall i \in \{1,\dots,k\},\\
 & x_{uvi} \in \mathbb{Z}^+ \cup \{0\} && \forall (u,v) \in E, \forall i \in \{1,\dots,k\},\\
 & d_{vi} \in \mathbb{Z}^+ \cup \{0\} && \forall v \in V, \forall i \in \{1,\dots,k\},\\
 & y_{uvi} \in \{0,1\} && \forall (u,v) \in E, \forall i \in \{1,\dots,k\},\\
 & \zeta_{ij} \in \{0,1\} && \forall i \in \{1,\dots,k\}, \forall j \in \{0,\dots,\lfloor\log_2(\overline{w})\rfloor\}, \\
 & \varphi_{uvij} \in \mathbb{Z}^+ \cup \{0\} && \forall (u,v) \in E, \forall i \in \{1,\dots,k\}, \forall j \in \{0,\dots,\lfloor\log_2(\overline{w})\rfloor\},\\
 & \phi_{uvi} \in \mathbb{Z} && \forall (u,v) \in E, \forall i \in \{1,\dots,k\}.\\
 & \overline{w} = \max_{(u,v) \in E} f_{uv} \notag
\end{align}
\end{subequations}

\end{document}